\newif\ifshort

\documentclass[a4paper,USenglish,cleveref,autoref,thm-restate]{lipics-v2021}

\hideLIPIcs  

\newcommand{\remove}[1]{}

\newcommand{\defproblema}[3]{
  \vspace{3mm}
\noindent\fbox{
  \begin{minipage}{.95\textwidth}
  \begin{tabular*}{\textwidth}{@{\extracolsep{\fill}}lr} #1  \\ \end{tabular*}
  {\bf{Input:}} #2  \\
  {\bf{Question:}} #3
  \end{minipage}
  }
  \vspace{2mm}
  }

\newcommand{\cA}{{\mathcal A}}

\newcommand{\OO}{\mathcal{O}}
\newcommand{\PP}{\ensuremath{\mathcal{P}}\xspace}

\newcommand{\nn}{{\mathbb N}}

\newcommand{\SpanTree}{{\sc Spanning Tree}}
\newcommand{\MLST}{{\sc Max-Leaf Spanning Tree}\xspace}
\newcommand{\MIST}{{\sc Max-Internal Spanning Tree}\xspace}
\newcommand{\NTST}{{\sc Non-Terminal Spanning Tree}\xspace}

\newcommand{\DivMLST}{{\sc Diverse Max-Leaf Spanning Tree}}
\newcommand{\DivMIST}{{\sc Diverse Max-Internal Spanning Tree}}
\newcommand{\DivNTST}{{\sc Diverse Non-Terminal Spanning Tree}}

\newcommand{\DivLIST}{\textsc{Leaf\&Internal-Constrained Diverse Spanning Trees}\xspace}
\newcommand{\DivLNTST}{\textsc{Leaf\&Non-terminal-Constrained Diverse Spanning Trees}\xspace}

\newcommand{\dLIST}{\textsc{Leaf\&Internal-CDST}\xspace}
\newcommand{\dLNTST}{\textsc{Leaf\&Non-Term-CDST}\xspace}

\newcommand{\kDiv}{{pairwise $k$-diverse}}

\newcommand{\dg}{{\rm deg}}

\newcommand{\lv}[1]{}

\usepackage{tikz}
\usetikzlibrary{decorations.shapes,decorations.pathreplacing,decorations.pathmorphing}
\usetikzlibrary{arrows,matrix,shapes}
\usetikzlibrary{positioning}
\tikzset{
        stars/.style={star,inner sep=2pt}
    }

\usepackage{todonotes}
\usepackage[ruled,vlined,linesnumbered]{algorithm2e}
\usepackage{hyperref}
\usepackage{xcolor}
\usepackage{cleveref}

\usepackage{microtype}

\newtheorem{reduction rule}{\bf Reduction Rule}[section]

\newtheorem{reductionrule}{Reduction Rule}

\title{Polynomial Kernels for Spanning Tree with Diversity Requirements} 

\author{Petr A. Golovach}{University of Bergen, Norway}{petr.golovach@uib.no}{https://orcid.org/0000-0002-2619-2990}{Supported by the Research Council of Norway via
BWCA (grant 314528) and Extreme-Algorithms (grant 355137) projects.}

\author{Diptapriyo Majumdar}{Indraprastha Institute of Information Technology Delhi, New Delhi, India}{diptapriyo@iiitd.ac.in}{https://orcid.org/0000-0003-2677-4648}{}

\author{Saket Saurabh}{The Institute of Mathematical Sciences, HBNI, Chennai, India}{saket@imsc.res.in}{https://orcid.org/0000-0001-7847-6402}{Supported by European Research Council under the European Union's Horizon 2020 research and innovation program (grant agreement No. 819416)}

\authorrunning{P. A. Golovach,  D. Majumdar, S. Saurabh} 

\Copyright{P. Golovach, D. Majumdar, S. Saurabh} 

\ccsdesc[100]{Theory of computation~Parameterized complexity and exact algorithms}
\ccsdesc[500]{Theory of computation~Graph algorithms analysis} 

\keywords{Parameterized Complexity, Kernelization, Diverse Solutions, Diverse Spanning Trees} 

\category{} 

\relatedversion{} 

\nolinenumbers 

\EventEditors{Jan Goedgebeur and Pawe{\l} Rz\k{a}\.{z}ewski}
\EventNoEds{2}
\EventLongTitle{52nd International Workshop on Graph-Theoretic Concepts in Computer Science (WG 2026)}
\EventShortTitle{WG 2026}
\EventAcronym{WG}
\EventYear{2026}
\EventDate{June 2--4, 2026}
\EventLocation{Kortrijk, Belgium}
\EventLogo{}
\SeriesVolume{376}
\ArticleNo{30}

\begin{document}

\maketitle


\begin{abstract}
Given a connected undirected graph $G$, a spanning tree is a subgraph $T$ of $G$ such that $V(T) = V(G)$ and $T$ is a tree.
A collection of $\ell$ spanning trees $T_1,\ldots,T_{\ell}$ is {\kDiv} if for every $i \neq j$, $|E(T_i) \triangle E(T_j)| \geq k$.
Given a connected undirected graph $G$ and integers $p, q, k, \ell$, {\DivLIST} asks whether there are $\ell$ distinct spanning trees $T_1,\ldots,T_{\ell}$ of $G$ that are {\kDiv} such that each tree has at least $p$ leaves and at least $q$ internal vertices.
Similarly, {\DivLNTST} takes a connected undirected graph $G$, $V_{\rm NT}\subseteq V(G)$, and three integers $p, k, \ell$, and asks if $G$ has $\ell$ spanning trees that are {\kDiv}, and each has at least $p$ leaves and contains the vertices of  $V_{\rm NT}$ as internal.
We consider these two problems from the kernelization perspective and provide polynomial kernels for {\DivLIST} and {\DivLNTST}, when parameterized by $p + q + k + \ell$ and $p + |V_{\rm NT}| + k + \ell$, respectively.
\end{abstract}

\section{Introduction}
\label{sec:intro}
The \emph{diversity} paradigm in the Parameterized Complexity framework was formulated by Fellows and Rosamond~\cite{FernauGS18}, but the concept was known even before, e.g., in Integer Programming~\cite{DannaW09}.
The paradigm is motivated by the common situation that arises in various applications when it is not sufficient to find just one high-quality solution to a combinatorial optimization problem due to numerous side constraints.  Furthermore, it makes no sense to try to find all optimal solutions because typically there are too many --- it is either computationally infeasible to list all of them or, if the first task is somehow achieved, impossible to analyze the huge list of solutions. 
Instead, it is preferable to work with a reasonably sized sample of sufficiently good solutions, which may be analyzed by making use of side information.
Such a sample should represent the variety of solutions, and the natural way to achieve this is to demand that the solutions be ``far apart''. 
This ``far apart'' notion can be mathematically expressed in various ways.
However,  when dealing with subset problems, it is common to measure the diversity of a sample as either the minimum pairwise Hamming distance between solutions, or the sum of all pairwise Hamming distances, or the cardinality of the union of the solutions.
Arguably, the first approach is most popular, especially in the context of Parameterized Complexity, as the sample size and the minimum Hamming distance are natural parameters in this case. 
The diversity paradigm attracted a lot of attention in recent years (see \cite{BasteFJMOPR22,BasteJMPR19,BergMS23,DrabikMasarik24,FominGJPS24,FominGPPS24,HanakaK0KKO23,HanakaKKLO22,HanakaKKO21,MisraM024} and the bibliography therein). In our paper, we consider the problem of finding diverse spanning trees with additional constraints on leaves and internal vertices.

Spanning trees are fundamental in Graph Theory and Computer Science, and various types of spanning tree problems naturally arise in applications. One of these variants is the {\MLST} 
problem~\cite{BonsmaBW03,DowneyF99,Estivill-CastroFLR05,Jansen12,LuR98,Solis-ObaBL17}. Formally, {\MLST} asks, given a graph $G$ and an integer $p$, whether $G$ has a spanning tree with at least $p$ leaves. The \MIST problem~ \cite{BjorklundKKZ17,ChenHGW18,FominGST13,HanakaK25,KnauerS15,Li0CW17,LiZW21,PrietoS03,PrietoS05} is dual to \MLST, and given a graph $G$ and an integer $q$, asks whether $G$ has a spanning tree with at least $q$ internal vertices.  The input to \NTST~\cite{HanakaK25} contains a graph $G$ with a set of \emph{non-terminal} vertices $V_{\rm NT}\subseteq V(G)$, and the question is whether  $G$ contains a spanning tree with the vertices of $V_{\rm NT}$ being internal. 

We consider the ``diverse'' variants of these problems.  For an integer $k\geq 0$, two spanning trees $T_1$ and $T_2$ are said to be \emph{$k$-diverse} if $|E(T_1)\triangle E(T_2)|\geq k$.
Respectively, a collection of spanning trees $T_1,\ldots,T_{\ell}$ is {\em {\kDiv}} if for every $i \neq j$, $T_i$ and $T_j$ are $k$-diverse. Since our approach for the considered variants of  \MLST and \MIST is based on similar arguments, we combine the requirements for leaves and internal vertices in the \DivLIST
 (\dLIST) problem.

\defproblema{\dLIST}%
{A connected graph $G = (V, E)$, and integers $p\geq 0$, $q\geq 0$, $k\geq 1$, and $\ell\geq 1$.}%
{Decide whether $G$ has $\ell$ pairwise $k$-diverse spanning trees $T_1,\ldots,T_\ell$ such that every tree has at least $p$ leaves and $q$ internal vertices.
}

Similarly,  {\MLST} and {\NTST} are generalized in \DivLNTST
(\dLNTST).

\defproblema{\dLNTST}%
{A connected graph $G$, a subset of non-terminal vertices $V_{\rm NT} \subseteq V(G)$, and integers $p\geq 0$, $k\geq 1$, and $\ell\geq 1$.}%
{Decide whether $G$ has $\ell$ pairwise $k$-diverse spanning trees $T_1,\ldots,T_\ell$ such that the vertices of $V_{\rm NT}$ are internal in every tree and each tree has at least $p$ leaves.
}

\subparagraph*{Our results.} As \MLST is the dual of the \textsc{Connected Dominating Set} problem, it is NP-complete~\cite{GareyJ79}. Also, {\MIST} and \NTST generalize \textsc{Hamiltonian Path} and $s$-$t$-\textsc{Hamiltonian Path}, respectively, which are well-known to be NP-complete~\cite{GareyJ79}. On the positive side, these problems are well-studied from the perspective of parameterized algorithms.
In particular, {\MLST}~\cite{BonsmaBW03,Estivill-CastroFLR05,Jansen12} and  \MIST~\cite{FominGST13,Li0CW17,PrietoS03,PrietoS05}  are FPT when parameterized by $p$ and $q$, respectively, and, furthermore, admit polynomial kernels for these parameterizations. The same holds for  {\NTST}~\cite{HanakaK25} parameterized by the size of the given set of non-terminal vertices. Our first result demonstrates that the kernelization carries over to the common diverse variant. 

\begin{restatable}{theorem}{LIST}
\label{thm:kernLIST}
{\dLIST} parameterized by $p + q + k + \ell$ admits a kernel with $\OO((p+q+k\ell)\ell)$ vertices.
\end{restatable}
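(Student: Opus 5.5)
Throughout, write $n=|V(G)|$, and write $m=|E(G)|$, so the cyclomatic number of $G$ is $m-n+1$. The plan is a win/win kernelization: either $G$ is large and structurally rich enough that we can answer \emph{yes} outright (output a constant-size yes-instance), or $G$ is small or can be reduced to size $\OO((p+q+k\ell)\ell)$.

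The first step is to handle the leaf and internal requirements separately from diversity, using known structure. For leaves, I would use the classical fact behind the linear kernels for \MLST: if $G$ has many vertices of degree at least $3$, or many ``long'' paths of degree-$2$ vertices, then every spanning tree built greedily has many leaves. Kleitman--West-type bounds give at least $n/4+2$ leaves for minimum degree $3$, and the Bonsma et al.\ / Estivill-Castro et al.\ extension allows long degree-$2$ paths after reduction. For internal vertices, a DFS tree of a connected graph has many internal vertices unless $G$ is close to a star-like structure, so a large $G$ admits a spanning tree with at least $q$ internal vertices. The difficulty is that we need \emph{one} tree satisfying \emph{both} bounds, and then $\ell$ such trees pairwise $k$-diverse.

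The second step produces diversity from cycles. A spanning tree $T$ with $m-n+1\ge$ something like $k\ell$ non-tree edges can be modified by exchanging edges along disjoint fundamental cycles. Each exchange changes $2$ edges, so taking $\lceil k/2\rceil$ exchanges on disjoint groups of cycles for each of the $\ell$ trees gives pairwise symmetric differences of at least $k$. The crucial point is to perform the exchanges so that leaf and internal counts change only by $\OO(1)$ per exchange. I therefore aim for a base tree with at least $p+\OO(k\ell)$ leaves and $q+\OO(k\ell)$ internal vertices, and then do the swaps locally inside long paths or around high-degree vertices, where they cost at most a constant.

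The reduction rules handle the remaining cases. If $m-n+1$ is small, i.e.\ $\OO(k\ell)$, then $G$ is a tree plus few edges. Pendant trees force leaves and are forced in every spanning tree, so they can be compressed: hanging subtrees contribute a fixed number of leaves and internal vertices, which are capped at $p+q$ and decrease $p,q$ accordingly. Long induced paths are shortened to length $\OO(1)$ while adjusting $q$. This yields $\OO(p+q+k\ell)$ vertices. If instead $m-n+1$ is large, I would use the degree/path reductions (bridge compression, shortening of long degree-$2$ chains while keeping enough chain length to preserve the internal count up to $p+q$) to bound the number of vertices by $\OO((p+q+k\ell)\ell)$. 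Otherwise the combinatorial lemma applies and the answer is \emph{yes}.

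The main obstacle is the simultaneous lemma: exhibiting $\ell$ pairwise $k$-diverse trees each meeting both the leaf and internal bounds once $G$ exceeds the size threshold. The leaf-maximizing and internal-maximizing constructions pull in opposite directions. I would first try to start from a maximal leafy tree and convert leaves on long degree-$2$ paths into internals, since those paths supply internal vertices nearly for free. I would then reserve a disjoint set of $\OO(k\ell)$ fundamental cycles, each located at branching vertices, for the diversity swaps. The factor $\ell$ in the bound suggests charging $\OO(p+q+k\ell)$ vertices per tree.
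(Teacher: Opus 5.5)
There is a genuine gap where you declare a large reduced graph a yes-instance. Your claim that a large $G$ (after your pendant-tree, bridge and degree-$2$-chain reductions) has a spanning tree with at least $q$ internal vertices is false. So is the unconditional ``simultaneous lemma'' you single out as the main obstacle.

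Take $G=K_{2,n}$ with $n$ large, $p=0$, $q=4$, $k=\ell=1$. Every tree edge has exactly one endpoint on the large side, so the tree degrees there sum to $n+1$. Hence exactly one vertex of that side is internal, at most $3$ vertices are internal in total, and this is a no-instance. Yet $K_{2,n}$ is $2$-connected, has no pendant vertices or bridges, its degree-$2$-paths have length $2$, and its cyclomatic number is $n-1$. None of your rules touches it, and your final branch answers yes. Deciding whether even one tree with $q$ internal vertices exists is \textsc{Max-Internal Spanning Tree}, whose kernels rely on crown-type reductions that your plan does not have.

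The idea you are missing is a third outcome. Once the graph is large, do not answer yes. Instead show that $(G,p,q,k,\ell)$ is \emph{equivalent} to the single-tree instance $(G,q)$ of \textsc{Max-Internal Spanning Tree}, and output the $2q$-vertex kernel of Li et al.\ (as $(G',0,q',1,1)$). The nontrivial direction goes as follows.
\begin{enumerate}
\item Fix one tree with $q$ internal vertices and declare those vertices a non-terminal set $V_{\rm NT}$.
\item Among spanning trees keeping $V_{\rm NT}$ internal, take one with the most leaves. The Estivill-Castro-type exchange argument says that a non-tree edge at a strictly internal vertex of a long degree-$2$-path of the tree gives more leaves, with all new leaves on that path and hence outside $V_{\rm NT}$. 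When no degree-$2$-path has length $\ge \ell+3$, this shows the tree has $\Omega(|V(G)|/\ell)-\OO(q)$ leaves.
\item Get diversity from $\ell$ disjoint groups of $\lceil k/4\rceil$ leaves each. The leaves must not be pendant in $G$ and must avoid two fixed tree-neighbours of every vertex of $V_{\rm NT}$. In the $i$-th tree, replace each pendant tree edge $vp_v$ of the $i$-th group by another edge $vq_v$ of $G$. Take the groups from an independent set of the auxiliary forest on the pairs $\{v,q_v\}$, so the swaps do not interfere.
\end{enumerate}
Each swap keeps $v$ a leaf, costs at most one leaf and keeps $V_{\rm NT}$ internal, and two trees differ in $2(|L_i|+|L_j|)\ge k$ edges. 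This is the controlled $\OO(1)$-per-swap exchange that your fundamental-cycle step leaves unspecified.

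A second problem is your treatment of degree-$2$-paths. Shortening them to constant length, or to a length depending only on $p+q$, is not safe for diversity. On the cycle $C_n$ with $p=q=0$ and $k=2$, the instance is yes exactly when $n\ge\ell$, since $\ell$ distinct spanning trees must omit $\ell$ distinct cycle edges. The safe rule contracts an edge of a degree-$2$-path only when its length is at least $\ell+3$. Then each of the $\ell$ trees omits at most one edge of the path, so some interior edge lies in all of them. Contracting it preserves every symmetric difference and lowers the number of internal vertices by exactly one, so $q$ is decremented. This threshold is what produces the factor $\ell$ in $\OO((p+q+k\ell)\ell)$, not a per-tree charging: in the counting argument, $\OO(p+q+k\ell)$ special vertices of a leaf-maximum tree are joined by degree-$2$-paths of length $\OO(\ell)$.
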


Similarly, \NTST admits a polynomial kernel for the parameterization by the size of the set of non-terminal vertices~\cite{HanakaK25}, and we get the extension stated in the following theorem.

\begin{restatable}{theorem}{LNPST}
\label{thm:kernLNTST}
{\dLNTST} parameterized by $|V_{\rm NT}| + p + k + \ell$ admits a kernel with  $\OO((|V_{\rm NT}|+p+k\ell)\ell)$  vertices.
\end{restatable}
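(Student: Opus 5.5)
The plan is to reduce in two phases: first, if the graph is large and "spread out", the answer is trivially yes; otherwise, the graph has a bounded-size core plus long degree-2 paths and pendant structures that can be shortened. The key structural tool I would rely on is the classical fact behind the \MLST kernel (Kleitman--West type bounds, as used by Estivill-Castro et al.\ and Bonsma et al.): a connected graph with many vertices of degree at least $3$ (after handling degree-2 chains and pendant vertices) has a spanning tree with many leaves. The same holds in \NTST, where $V_{\rm NT}$ plays the role of the required internal vertices. I would adapt this so that when $|V(G)|$ exceeds $c(|V_{\rm NT}|+p+k\ell)\ell$, we can construct $\ell$ pairwise $k$-diverse trees directly. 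If that fails, the instance is reduced by rules whose safeness accounts for $V_{\rm NT}$.

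\textbf{Reduction rules.} I would proceed in this order.
\begin{enumerate}
\item Handle pendant (degree-1) vertices. Such a vertex is a leaf in every spanning tree, and its neighbor must be internal. If it lies in $V_{\rm NT}$, return a trivial no-instance. If many pendant vertices attach to the same neighbor, keep only $p+1$ of them and decrease $p$ accordingly. They contribute no diversity, since pendant edges lie in every tree.
\item Handle bridges more generally: edges in every spanning tree contribute nothing to the symmetric difference.
\item Handle long induced paths of degree-2 vertices $v_1,\dots,v_t$. A spanning tree omits at most one edge of such a chain, so the chain contributes at most $2$ to a symmetric difference and at most $2$ leaves (the endpoints of a missing edge). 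Chains therefore provide cheap diversity: $\ell$ trees can differ by choosing distinct omitted edges. I would shorten any chain longer than $\OO(k\ell)$, and more precisely bound the number of such long chains. Vertices of $V_{\rm NT}$ inside a chain forbid omitting adjacent edges, so I would mark them and keep distances to them.
\end{enumerate}

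\textbf{Yes-instance argument.} After these rules, suppose the number of vertices of degree $\ge3$ outside a neighborhood of $V_{\rm NT}$ exceeds $c(p+k\ell)$. Then a leafy spanning tree exists with $\Omega(p+k\ell)$ leaves that are not in $V_{\rm NT}$, and I would argue that $V_{\rm NT}$ can be kept internal by first routing through them. Diversity is then obtained by an exchange argument: take a tree $T$ with at least $p+2k\ell$ leaves, and for each $i$ swap the parent edges of a fresh set of about $k$ leaves for non-tree edges at those leaves. Such edges exist because the degree is at least $2$. Each swap keeps the vertex a leaf and changes $2$ edges, so with disjoint leaf sets the trees pairwise differ in at least $k$ edges. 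Each tree still has at least $p$ leaves, and $V_{\rm NT}$ stays internal provided the new parents are chosen so as not to make any $V_{\rm NT}$ vertex a leaf. The main obstacle is guaranteeing that such a large leaf set exists whose alternative edges avoid turning required vertices into leaves and keep connectivity. I would first try choosing the leaves so that each has two neighbors in the internal set of $T$; the Kleitman--West style construction (degree-$\ge3$ branching) naturally provides that.

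\textbf{Kernel size.} If the trivial-yes case does not apply, the number of branching vertices is $\OO(|V_{\rm NT}|+p+k\ell)$. Degree-2 chains number at most linearly many, each of length $\OO(\ell)$ after shortening, and pendant vertices are bounded by the leaf bound. Summing these gives the claimed $\OO((|V_{\rm NT}|+p+k\ell)\ell)$ vertices.
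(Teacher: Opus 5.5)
There is a genuine gap at the centre of your plan. The dichotomy ``large and spread out $\Rightarrow$ trivially yes'' is false once $V_{\rm NT}\neq\emptyset$, because $G$ may have no spanning tree at all in which every vertex of $V_{\rm NT}$ is internal, however many branching vertices lie far from $V_{\rm NT}$. Concretely, take a huge clique, a vertex $a$ of it, and a triangle $a,u,w$ with $\deg_G(u)=\deg_G(w)=2$ and $V_{\rm NT}=\{u,w\}$. Every spanning tree contains exactly two edges of the triangle, so $u$ or $w$ is a leaf; the instance is a no-instance, and none of the rules you list applies to it.

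The obstruction is not local. Glue a large clique by a bridge to an arbitrary \textsc{Non-Terminal Spanning Tree} instance $(H,V_{\rm NT})$ at a vertex outside $V_{\rm NT}$. This preserves the answer while making the graph as large and leafy as you like, and that problem is NP-hard (it contains $s$-$t$ \textsc{Hamiltonian Path}). So unless P$=$NP, no polynomial-time rule can declare large instances to be yes-instances. Your step ``$V_{\rm NT}$ can be kept internal by first routing through them'' therefore cannot be carried out, and your final size count, which presupposes the trivial-yes case, does not go through.

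The missing idea is that in the large case the instance is only \emph{equivalent} to the single-tree question $(G,V_{\rm NT},0,1,1)$, i.e.\ to the \textsc{Non-Terminal Spanning Tree} instance $(G,V_{\rm NT})$. The paper then compresses that instance with the $3|V_{\rm NT}|$-vertex kernel of Hanaka and Kobayashi, used as a black box. The equivalence rests on a conditional leafy-tree statement (\Cref{lem:leavesbound}): \emph{assuming} one spanning tree with $V_{\rm NT}$ internal exists, a leaf-maximal tree among those has many leaves. Otherwise it contains a long degree-2 path avoiding $V_{\rm NT}$, and that path either admits a leaf-increasing exchange whose new leaves lie only on the path, or is a long degree-2 path of $G$, which the contraction rule forbids.

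Your exchange step also needs two repairs, which the paper makes in \Cref{lem:diverse}:
\begin{enumerate}
\item Swapping $vp_v$ for $vq_v$ disconnects the tree if two swapped leaves are each other's alternative neighbours. The paper avoids this by choosing the $q_v$ so that the auxiliary graph on the leaf set is a forest, and swapping only an independent half of it.
\item It is the deletion of $vp_v$, not the choice of $q_v$, that can turn $p_v\in V_{\rm NT}$ into a leaf. The paper prevents this by reserving two tree-neighbours of each vertex of $V_{\rm NT}$ and never swapping them.
\end{enumerate}
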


The key structural observation behind the proofs of \Cref{thm:kernLIST} and \Cref{thm:kernLNTST} is the fact that if a connected graph $G$ of minimum degree at least two has a spanning tree with at least $2\ell \big\lceil\frac{k}{4}\big\rceil$ leaves, then $G$ has $\ell$ pairwise $k$-diverse spanning trees. Using this, we construct a polynomial-time algorithm that, given an instance of \dLIST, either reports that this is a yes-instance, or returns an equivalent instance of bounded size, or outputs an equivalent instance of \MIST. Thus, we show that, from the kernelization viewpoint, \dLIST is as hard as \MIST. Similarly, for \dLNTST, we obtain that kernelization for finding diverse spanning trees is essentially as hard as for finding one tree with a given set of non-terminals.

\subparagraph*{Related work.}
The study of diverse spanning trees was initiated by Hanaka et al.~\cite{HanakaKKO21}. However, they considered the different diversity measure -- the sum of all pairwise Hamming distances. They proved that $\ell$ spanning trees maximizing this measure can be found in polynomial time. Moreover, this result can be extended to the problem of finding diverse matroid bases; we remind that the edge sets of spanning trees are bases of graphic matroids. The result is obtained by reducing finding diverse bases to the problem of finding disjoint bases, which could be solved in polynomial time by the classical results of Nash-Williams~\cite{NashWilliams64} and Edmonds~\cite{Edmonds65}.  We note that this approach fails for our case when we maximize the minimum Hamming distance between trees. 
Diverse matroid bases when the diversity measure is the minimum Hamming distance were investigated by Fomin et al. in~\cite{FominGPPS24}. They proved that the problem of finding $\ell$ pairwise $k$-diverse bases admits a polynomial kernel for the parameterization by $\ell$ and $k$.  Deciding whether a given graph has $\ell$ pairwise $k$-diverse spanning trees is a special case of this problem. 
In fact, Fomin et al. proved their result for the more general weighted problem where the diversity is defined as the weight of the symmetric difference of two bases. 

The first FPT algorithm for  \MIST parameterized by the number of leaves $p$ was given by Downey and Fellows~\cite{DowneyF99}, and a faster algorithm was provided by Bonsma et al.~\cite{BonsmaBW03}. The first polynomial kernel was given by Estivill-Castro et al.~\cite{Estivill-CastroFLR05}. The kernel with at most $5.5p$ vertices was established by Jansen~\cite{Jansen12}. Approximation algorithms for the problem were given in 
\cite{LuR98,Solis-ObaBL17}. As we are interested mainly in the parameterized complexity for the parameterization by the number of leaves, we do not discuss here the results for the \textsc{Connected Dominating Set} problem, where the task is to minimize the number of internal vertices of a spanning tree.  

The Parameterized Complexity study of \MIST was initiated by Prieto-Rodriguez and Sloper in~\cite{PrietoS03,PrietoS05} who gave a first FPT algorithm for the parameterization by the number of internal vertices $q$ and proved that the problem admits a polynomial kernel.
The currently best FPT algorithm was given by Bj{\"{o}}rklund et al.~\cite{BjorklundKKZ17}.
Fomin et al.~\cite{FominGST13} demonstrated that \MIST admits a kernel with at most $3q$ vertices, and Li et al.~\cite{Li0CW17} got the improved kernel with at most $2q$ vertices.
{\NTST} was considered by Hanaka and Kobayashi~\cite{HanakaK25}. In particular, they proved that the problem has a kernel with at most $3k$ vertices and gave a single-exponential in $k$ algorithm when the problem is parameterized by the size $k$ of the set of non-terminals.
Approximability of {\MIST} was investigated in~\cite{ChenHGW18,KnauerS15,Li0CW17,LiZW21}.  

Finally, we briefly survey some related results about diversity. The systematic study of the diversity problems in the Parameterized Complexity framework was initiated by Baste et al.~\cite{BasteFJMOPR22}.  
For the diverse variants of the classical subset optimization problems, we mention that problems of finding diverse matchings were considered in~\cite{FominGJPS24,FominGPPS24}, diverse paths in~\cite{HanakaKKLO22,HanakaKKO21}, diverse $s$-$t$-cuts in \cite{BergMS23}, diverse hitting sets in~\cite{BasteJMPR19}, and diverse satisfying assignments for \textsc{Satisfiability} in~\cite{MisraM024}.  
Solving the diversity problems under structural parameterizations was discussed in~\cite{BasteFJMOPR22,DrabikMasarik24}. Approximation algorithms for finding diverse solutions were considered in~\cite{HanakaK0KKO23}. We refer to \cite{DannaW09} for diversity for the (mixed) integer programming problems. 

\subparagraph*{Organization of the paper.} The rest of the paper is organized as follows. In \Cref{sec:prelims}, we introduce basic notions used throughout the paper. In \Cref{sec:technical}, we provide auxiliary technical results, and in \Cref{sec:kern}, we prove  \Cref{thm:kernLIST} and \Cref{thm:kernLNTST}. We conclude in  \Cref{sec:conclusion} by stating some open problems.

\section{Preliminaries}
\label{sec:prelims}

\subparagraph*{Sets, Numbers, and Graph Theory:}
We use $\nn$ to denote the set of all natural numbers.
Given an integer $r\geq 1$, the set $[r]$ denotes the set $\{1,\ldots,r\}$.
Given two sets $A$ and $B$, the {\em symmetric difference} between $A$ and $B$ is denoted by $A \Delta B$ which equals to the set $(A \setminus B) \cup (B \setminus A)$.
We use standard graph theoretic terminologies from Diestel's book \cite{Diestel-Book}.
For a graph $G = (V, E)$, let $V(G) = V$ denote the set of vertices and $E(G) = E$ denote the set of edges of $G$.
If the graph is clear from the context, we use $n$ to denote the number of vertices, i.e. $|V(G)|$, and $m$ to denote the number of edges, i.e. $|E(G)|$.
For a set $X \subseteq V(G)$, we use $G[X]$ to denote the subgraph induced by the vertex subset $X$.
Given a vertex $v \in V(G)$, let ${\dg}_G(v)$ denote the degree of $v$ in $G$, i.e. the number of neighbors of $v$ in $G$.
A {\em pendant vertex} is a vertex of degree exactly one.
A {\em pendant edge} is the unique edge that is incident to a pendant vertex. 
A \emph{tree} is a connected graph without cycles.
Given a graph $G$, a \emph{spanning tree} $T$ of $G$ is a tree subgraph of $G$ whose set of vertices is $V(G)$.
For a tree $T$, pendant vertices are \emph{leaves} and the other vertices are \emph{internal}.
A {\em path} $P$ in a graph is a sequence of distinct vertices $(v_0,\ldots,v_{\ell})$ such that for every $i \in [\ell]$, $v_{i-1}v_{i}$ is an edge; $\ell$ is the {\em length} of $P$.
The vertices $v_0$ and $v_\ell$ are the \emph{end-vertices}, and the other vertices are \emph{internal}.  
A {\em degree-2-path}  $P = (v_1,\ldots,v_{\ell})$  is a sequence of distinct vertices with only possible exception $v_0=v_\ell$ such that $v_1,\ldots,v_{\ell-1}$ are vertices of degree two in $G$; in the same way as for standard paths, $v_0$ and $v_\ell$ are end-vertices and the other vertices are internal, and $\ell$ is the length.

\subparagraph*{Parameterized Complexity and Kernelization:}
A {\em parameterized problem} $L$ is a set of instances $(x, k)$ where $x \in \Sigma^*$ over a finite alphabet $\Sigma$ and $k \in \nn$ is the parameter.
A parameterized problem $L$ is said to be {\em fixed-parameter tractable} if given $(x, k) \in \Sigma^* \times \nn$, there exists an algorithm $\cA$ that runs in $f(k)|x|^{\OO(1)}$-time, where $f$ is a computable function, and correctly decides whether $(x, k) \in L$ or not.
We call this algorithm $\cA$ a {\em fixed-parameter algorithm} (or {\em FPT algorithm} in short).
Note that we allow combinatorial explosion in the parameter $k$ while the algorithm runs in time polynomial in $|x|$.
We say that two instances $(x, k)$ and $(x', k')$ of a parameterized problem $L$ are {\em equivalent} when $(x, k) \in L$ if and only if $(x', k') \in L$.
A {\em kernelization} for a parameterized problem $L \subseteq \Sigma^* \times \nn$ is an algorithm that runs in time polynomial in $|x| + k$ and outputs $(x', k')$ such that $|x'| + k' \leq g(k)$ for some computable function $g: \nn \rightarrow \nn$.
If $g(k)$ is $k^{\OO(1)}$, then $L$ is said to admit a {\em polynomial kernel}.
A kernelization algorithm usually consists of a collection of {\em reduction rules} (or preprocessing rules) that have to be applied exhaustively.
A reduction rule is {\em safe} if given $(x, k)$ of $L$, one application of the reduction rule outputs an equivalent instance $(x', k')$ of $L$.
We refer to \cite{CyganFKLMPPS15,DowneyF13,FominLSZ19} for the formal introduction to the parameterized complexity theory and kernelization.

\section{Technical lemmata}\label{sec:technical}

In this section, we provide auxiliary statements  
that will be used in our kernelization algorithms. We start with the following folklore observation.

\begin{observation}\label{obs:leavesvsint}
Let $T$ be a tree with at least two vertices, and let $V_1$ and $V_{\geq 3}$ be the sets of vertices of degree one and at least three, respectively. Then $|V_{\geq 3}|\leq |V_1|-2$.  
\end{observation}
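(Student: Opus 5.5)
The plan is the standard handshake-lemma argument. Let $T$ have $n\geq 2$ vertices. Then it has $n-1$ edges, so the sum of degrees is $\sum_{v\in V(T)}\dg_T(v)=2(n-1)$. Because $T$ is connected with at least two vertices, every vertex has degree at least one. Hence $V(T)$ is partitioned into $V_1$, $V_2$ (the vertices of degree exactly two) and $V_{\geq 3}$, and $n=|V_1|+|V_2|+|V_{\geq 3}|$.

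Next, bound the degree sum from below: vertices in $V_1$ contribute $1$ each, those in $V_2$ contribute $2$ each, and those in $V_{\geq 3}$ contribute at least $3$ each. This gives
\[
|V_1|+2|V_2|+3|V_{\geq 3}| \leq 2(|V_1|+|V_2|+|V_{\geq 3}|)-2 .
\]
The $2|V_2|$ terms cancel, and the inequality rearranges to $|V_{\geq 3}|\leq |V_1|-2$, as required.

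There is no real obstacle here. The only point needing care is the hypothesis $n\geq 2$, which excludes degree-zero vertices. For a single-vertex tree, $V_1=\emptyset$ and the bound would fail.
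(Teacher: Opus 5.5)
Your handshake-lemma argument is correct and complete. Writing the degree sum as $2(n-1)$, bounding it below by $|V_1|+2|V_2|+3|V_{\geq 3}|$, and cancelling gives $|V_{\geq 3}|\leq |V_1|-2$; you also use $n\geq 2$ correctly to exclude degree-zero vertices. The paper gives no proof, calling the observation folklore, and your computation is the standard justification it implicitly relies on.
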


We are using as black boxes  the currently best kernelization results for \MIST and \NTST given by Li et al.~\cite{Li0CW17} and Hanaka and Kobayashi~\cite{HanakaK25}, respectively.
Recall that the task of \MIST is, given a graph $G$ and an integer $q$, to decide whether $G$ has a spanning tree with at least $q$ internal vertices.

\begin{proposition}[Theorem 1.1 of \cite{Li0CW17}]
\label{prop:MIST-dec-result}
{\MIST} admits a kernel with $2q$ vertices.
\end{proposition}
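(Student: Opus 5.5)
The statement is Proposition~\ref{prop:MIST-dec-result}, which the paper cites as Theorem~1.1 of~\cite{Li0CW17} and uses as a black box. Still, here is how I would prove a linear kernel for \MIST. The plan is the crown-decomposition approach of Fomin et al.~\cite{FominGST13}, sharpened to reach $2q$.

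\emph{Preprocessing.} First, if $G$ is disconnected, the answer is no. Next, compute in polynomial time a DFS or otherwise locally optimal spanning tree $T$. If $T$ has at least $q$ internal vertices, return a trivial yes-instance. Otherwise the leaves of $T$ form an independent-like set $L$ with $|V\setminus L|<q$. In a DFS tree the leaves are pairwise non-adjacent, so $I=L$ is an independent set in $G$ and $|V(G)\setminus I|\le q-1$.

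\emph{Reduction.} If $|I|$ is large compared with $N(I)$, I would apply the expansion lemma to the bipartite graph between $I$ and $N(I)$, asking for a $2$-expansion. This yields a crown: sets $C\subseteq I$ and $H=N(C)$, where each vertex of $H$ has two private neighbours in $C$. Any spanning tree can then be rearranged so that every vertex of $H$ is internal, using the two private neighbours. The reduction replaces $C$ by the $2|H|$ matched vertices and merges or simplifies suitably, with $q$ adjusted by a fixed amount. The main work is proving safety. Given an optimal tree in the original graph, one exchanges edges so that the vertices of $C$ beyond those matched become leaves hanging off $H$, without decreasing the internal count. Conversely, trees in the reduced graph extend to the original graph.

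\emph{Size bound and the main obstacle.} Once no $2$-expansion exists, $|I|\le 2|N(I)|\le 2(q-1)$, which gives about $3q$ vertices. To reach exactly $2q$ one needs a finer counting. The idea is to use the structure of a maximal-internal spanning tree, for instance a tree that is locally optimal under leaf-exchange operations. In such a tree, the leaves can be charged injectively to internal vertices, with degree-$2$ and degree-$\ge 3$ vertices handled via Observation~\ref{obs:leavesvsint}. This is combined with a weighted version of the crown reduction that lets each head vertex keep only one private leaf when its tree neighbour is already internal. I expect this refined charging, which argues that at most $q$ leaves survive, to be the hardest step. I would follow the fractional (LP-based) matching argument of Li et al.~\cite{Li0CW17} rather than attempt it from scratch.
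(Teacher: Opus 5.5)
The paper does not prove this statement: it imports Theorem~1.1 of Li et al.~\cite{Li0CW17} as a black box, which is all Theorems~\ref{thm:kernLIST} and~\ref{thm:kernLNTST} need. Your outline follows the architecture of the $3q$ kernel of Fomin et al.~\cite{FominGST13}, but it is not a proof. The two steps that carry the weight are either wrong as stated or deferred.

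A minor point first: the root of a DFS tree can be a leaf adjacent to another leaf, e.g.\ on a cycle, so it must be excluded from $I$.

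The real gap is the safety of the crown step. Your claim that an optimal tree can be rearranged so that the unmatched crown vertices become leaves hanging off $H$, without losing internal vertices, is false for an arbitrary $2$-expansion. Take the graph with edges $h_1a_1,h_1a_2,h_2b_1,h_2b_2,h_1c,h_2c,h_1u,h_2u,uw$. Then $I=\{a_1,a_2,b_1,b_2,c,w\}$ is the leaf set of a suitable DFS tree rooted at $a_1$, with $N(I)=\{h_1,h_2,u\}$ and $|I|=2|N(I)|$. The crown $H=\{h_1,h_2\}$, $C=\{a_1,a_2,b_1,b_2,c\}$ has the $2$-expansion $h_1\mapsto\{a_1,a_2\}$, $h_2\mapsto\{b_1,b_2\}$, which leaves $c$ unmatched. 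The optimum is $4$, with $h_1,h_2,c,u$ internal. In any spanning tree where $c$ is a leaf, however, $h_1$ and $h_2$ must be joined through $u$, and only $3$ vertices are internal. So discarding $c$ lowers the optimum, whereas discarding an extra pendant crown vertex (say a third neighbour $a_3$ of $h_1$) would not. Hence no fixed adjustment of $q$ is correct. The rule must control which crown vertices survive, so that they can still act as internal connectors of $H$ in an optimal solution. Proving that this is possible is exactly the nontrivial part of the Fomin et al.\ kernel, which they obtain from the hypertree/partition-connectivity theorem. Your sketch has no substitute for it.

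For the improvement from $3q$ to $2q$ you give no argument and explicitly defer to Li et al., so your $2q$ bound rests on the same citation the paper uses. The mechanism you describe would also not work as stated. Observation~\ref{obs:leavesvsint} bounds the number of branching vertices by the number of leaves. That is the wrong direction for charging leaves injectively to internal vertices; a star admits no such charging. A statement like ``at most $q$ leaves survive'' can only come from irreducibility, i.e.\ from showing that a suitably chosen tree with too many leaves exhibits a reducible structure. Finally, Li et al.\ introduce neither a weighted crown rule nor an LP/fractional-matching argument. They keep the reduction rule of Fomin et al.\ and gain the factor differently: they first compute a locally optimal spanning tree via local exchange operations, then locate the reducible structure relative to that tree.
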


We remind that \NTST asks whether a graph $G$ given together with a subset of vertices $V_{\rm NT}$ has a spanning tree where the vertices of $V_{\rm NT}$ are internal.

\begin{proposition}[Theorem 3.1 of \cite{HanakaK25}]
\label{prop:NTST-dec-result}
{\NTST} admits a kernel with $3|V_{\rm NT}|$ vertices.
\end{proposition}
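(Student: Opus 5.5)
Since this is Theorem~3.1 of~\cite{HanakaK25}, we only indicate how we would prove it. Write $R=V(G)\setminus V_{\rm NT}$ and $s=|V_{\rm NT}|$.

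\emph{Step 1: reformulate the problem.} We claim that $(G,V_{\rm NT})$ is a yes-instance if and only if $G$ has a subtree $T'$ with $V_{\rm NT}\subseteq V(T')$ in which every vertex of $V_{\rm NT}$ has degree at least two. One direction is trivial, since a spanning tree with $V_{\rm NT}$ internal is such a subtree. For the other, $G$ is connected, so we can repeatedly attach a vertex outside $T'$ that has a neighbour in the current tree. Attaching vertices never decreases degrees, so all vertices of $V_{\rm NT}$ stay internal. As a consequence, vertices of $R$ serve only as leaves or as connectors of $T'$.

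\emph{Step 2: make $R$ independent.} We apply a rule that contracts every connected component $C$ of $G[R]$ into a single vertex $c_C\in R$ adjacent to $N(C)$. Safety in the backward direction is easy: if a tree uses $c_C$ with $d$ neighbours, we realise it in $G$ by a Steiner tree inside $C$ joining the corresponding attachment vertices. This creates no cycle and does not change the degrees of $V_{\rm NT}$ vertices.

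The forward direction is where we expect the main obstacle. A solution tree may enter $C$ several times, and after contraction the resulting cycles or parallel edges must be removed. This can lower the degree of a vertex $v\in V_{\rm NT}$, for instance one with two neighbours in $C$. To handle this, we would refine the rule: contract only if every $v\in V_{\rm NT}$ keeps at least two distinct ``exits'', or keep two representative vertices for such $v$, or else argue by exchange that an optimal $T'$ enters each component through a single attachment structure. We would first try the exchange argument: reroute the tree inside $C$ so that it becomes a single subtree of $C$, keeping each NT vertex adjacent to $C$ through one edge plus its other tree edges. If this fails, we would settle for a rule that adds only $O(s)$ extra vertices.

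\emph{Step 3: bound $|R|$ with the expansion lemma.} After Step~2, $R$ is an independent set and every vertex of $R$ has all its neighbours in $V_{\rm NT}$. If $|R|\ge 2s$ (after discarding vertices of $R$ made redundant by twins), we apply the $2$-expansion lemma to the bipartite graph between $V_{\rm NT}$ and $R$. This yields $\emptyset\neq A\subseteq V_{\rm NT}$ and $B\subseteq R$ with $N(B)\subseteq A$, together with a $2$-expansion of $A$ into $B$. We then delete the vertices of $B$ not used by the expansion.

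For safety, any solution restricted to $A$ can be rebuilt in the new instance. Each $a\in A$ gets its two private leaves from the expansion, so it has degree at least two, and the connectivity formerly provided through $B$ is reproduced via $A$'s remaining edges. Here we would argue, using Step~1, that $R$-vertices of $B$ adjacent only to $A$ were only useful as leaves of $A$ or as connectors among $A$. Connector use needs one extra surviving vertex per connection, which can be charged to the expansion; if necessary we would use a $3$-expansion, at the cost of a constant factor.

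Exhaustive application gives $|R|<2s$, hence at most $3s=3|V_{\rm NT}|$ vertices. The crux of the whole proof is the safety of the contraction and expansion rules with respect to the degree-two requirement, i.e.\ Step~2.
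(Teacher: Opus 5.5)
The paper does not prove this proposition; it imports it as a black box from Hanaka and Kobayashi. Your sketch therefore has to stand on its own, and as written it has a genuine gap at Step~2.

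\textbf{Step~2.} Contracting a component $C$ of $G[R]$ is unsafe, and the exchange argument you hope to use is false. Take $G$ to be a triangle on $v,a,b$ with $V_{\rm NT}=\{v\}$. The path $a\,v\,b$ is a solution. After contracting $C=\{a,b\}$, however, $v$ has a single neighbour and the instance becomes a no-instance. Every solution must contain both $va$ and $vb$, so its trace on $C$ can never be rerouted into a single subtree of $C$ without closing a cycle through $v$.

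Your Step~1 is correct but too weak; the missing idea is to replace ``subtree'' by ``forest''. For connected $G$, the instance $(G,V_{\rm NT})$ is a yes-instance if and only if $G$ has an acyclic edge set $F$ with $\deg_F(v)\ge 2$ for all $v\in V_{\rm NT}$, since any such $F$ extends to a spanning tree. Moreover, you may discard from $F$ every edge not incident to $V_{\rm NT}$. Hence edges of $G[R]$, and vertices of $R$ with no neighbour in $V_{\rm NT}$, matter only for the connectivity of $G$. You can delete them all and then restore connectivity by adding a few $R$--$R$ edges. This works because, if the result is disconnected, every component contains a vertex of $R$: any edge of $G$ leaving a set of non-terminals is incident to $V_{\rm NT}$ and was kept. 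This makes $R$ independent without any contraction.

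\textbf{Step~3.} This step also does not deliver the claimed bound.
\begin{itemize}
\item The $2$-expansion lemma gives no surplus. For example, let the non-terminals form a path, each with two private pendant neighbours in $R$, and give one of them a third. Then $|R|>2s$, yet the lemma may return $A$ equal to a single non-terminal and $B$ its two private neighbours. Deleting the unused vertices of $B$ then removes nothing.
\item So ``exhaustive application gives $|R|<2s$'' does not follow from your rule.
\item Falling back on a $3$-expansion would give $4s$ vertices, not $3s$.
\end{itemize}

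The rule that works, and whose safety follows directly from the forest formulation, is to delete all of $B$ and remove $A$ from $V_{\rm NT}$.
\begin{itemize}
\item Forward direction: drop from $F$ the edges incident to $B$. Since $N(B)\subseteq A$, only vertices of $A$ lose degree, and they are no longer required to be internal.
\item Backward direction: add the $|A|$ stars of the expansion. Their leaves lie in $B$, which is absent from the reduced graph, so they become pendant vertices. Hence no cycle is created, and each vertex of $A$ gets degree $2$.
\end{itemize}
Each application strictly decreases $|V_{\rm NT}|$. When the rule no longer applies, fewer than $2|V_{\rm NT}|$ relevant vertices of $R$ remain, giving fewer than $3|V_{\rm NT}|$ vertices in total. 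Your discussion of $B$ serving as ``connectors'' then becomes unnecessary.
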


\ifshort
The proof of the following lemma is implicit in~\cite{Estivill-CastroFLR05} (see also~\cite{DowneyF13}). The proof is omitted in this extended abstract and can be found in the 
full version of our paper.
\todo[inline]{PG: Add the reference}
\else
The proof of the following lemma is implicit in~\cite{Estivill-CastroFLR05} (see also~\cite{DowneyF13}); we provide its proof for completeness. 
\fi
Let $P = (v_0,\ldots,v_{\ell})$ be a degree-2-path in $G$ such that $\ell \geq 6$.
We say that the vertices $v_3,\ldots,v_{\ell - 3}$ are {\em strictly internal} vertices of $P$.

\begin{lemma}
\label{lemma:non-tree-edge-increase-leaves}
Let $T$ be a spanning tree of $G$ that has $s$ leaves, and let $P$ be a degree-2-path in $T$ of length at least $6$.
Furthermore, assume that a strictly internal vertex $v$ of $P$ is adjacent to $w$ in $G$ such that $vw$ is not an edge of $T$.
Then, there exists a spanning tree $\widehat{T}$ of $G$ that has at least $s + 1$ leaves. 
Moreover, for any leaf $z$ of $\widehat{T}$ that is not a leaf of $T$,  $z$ is an internal vertex of $P$.
\end{lemma}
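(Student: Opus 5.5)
The plan is a local edge exchange that adds the non-tree edge $vw$ and deletes one edge of $P$, chosen so that two new leaves appear on $P$ while at most one old leaf is lost. Let $P=(v_0,\ldots,v_\ell)$ with $\ell\geq 6$ be the degree-2-path of $T$. Let $v=v_i$ with $3\leq i\leq \ell-3$, and let $w$ be a neighbour of $v$ in $G$ with $vw\notin E(T)$. Since $v$ has degree two in $T$, its tree neighbours are exactly $v_{i-1}$ and $v_{i+1}$, so $w\notin\{v_{i-1},v_{i+1}\}$.

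First add $vw$ to $T$. This creates a unique cycle $C$ through $vw$, and $C$ contains the tree path from $v$ to $w$. That path leaves $v$ along $P$ in one direction, say towards $v_{i+1}$ (the other case is symmetric, with $i-1,i-2$ in place of $i+1,i+2$). Because $v_{i+1},\ldots,v_{\ell-1}$ all have degree two in $T$, the path runs along $P$ until it reaches $w$ or $v_\ell$. Either way it uses the edge $v_{i+1}v_{i+2}$, unless $w$ itself lies on that segment of $P$.

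The main step is choosing which cycle edge to delete.

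\emph{Generic case.} Suppose the edge $v_{i+1}v_{i+2}$ lies on $C$. Delete it and set $\widehat{T}=T+vw-v_{i+1}v_{i+2}$, which is a spanning tree.
\begin{itemize}
\item $v_{i+1}$ loses one edge and gains none, so its degree drops from two to one and it becomes a leaf.
\item $v_{i+2}$ also drops to degree one and becomes a leaf, provided $v_{i+2}\neq w$. This holds because $i+2\leq \ell-1$, so $v_{i+2}$ is an internal vertex of $P$.
\item $v$ rises to degree three, and $w$ gains one edge, so $w$ may stop being a leaf.
\end{itemize}
The net gain is at least $2-1=1$ leaf. The only new leaves are $v_{i+1}$ and $v_{i+2}$, which are internal vertices of $P$, as the ``moreover'' part requires.

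\emph{Degenerate cases.} These arise when $w$ lies on $P$ near $v$, so that the cycle is short. Here I would use that $v$ is strictly internal, so at least two path vertices lie on each side of $v$ before an endpoint, and I would delete a different edge of $C$, still between two internal degree-two vertices of $P$.
\begin{itemize}
\item If $w=v_j$ with $j\geq i+2$ and $j\leq \ell-1$, then $C$ consists of $vw$ together with $v_i\cdots v_j$. Delete $v_{i+1}v_{i+2}$ when $j\geq i+3$. When $j=i+2$, delete $v_iv_{i+1}$ instead: then $v_{i+1}$ becomes a leaf, $v$ keeps degree two, and $w$ goes from degree two to three. This gives only one new leaf and no lost leaf, which is still a net gain of $+1$.
\item If $w=v_\ell$ or $w=v_0$, then $w$ is an endpoint of $P$ and may be a leaf that is lost. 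In that situation I would cut an edge adjacent to a degree-two vertex on the side that produces two new leaves. This uses $i\geq 3$ and $\ell-i\geq 3$ to guarantee that vertices such as $v_{i+1},v_{i+2}$ (or $v_{i-1},v_{i-2}$) are internal to $P$ and distinct from $w$.
\end{itemize}

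The main obstacle I expect is bookkeeping these degenerate positions of $w$, together with the case where the cycle exits $P$ on the other side. In every case I need to exhibit one deleted edge whose two endpoints both have degree two in $T$, are internal vertices of $P$, and differ from $w$. Failing that, I need one deleted edge that yields a single new leaf while losing none, which I will check is always available.
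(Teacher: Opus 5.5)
Your proposal is correct and uses the same exchange argument as the paper: add $vw$, then delete an edge of the resulting cycle that joins two internal vertices of $P$, so that only $w$ can stop being a leaf while one or two internal vertices of $P$ become leaves. The paper organizes its cases by rooting $T$ at an endpoint of $P$ and comparing $v$ and $w$ by ancestry, whereas you always cut next to $v$'s cycle neighbour, and this makes your degenerate cases unnecessary. Since $w\notin\{v_{i-1},v_{i+1}\}$, the tree path from $v$ to $w$ always contains $v_{i+1}v_{i+2}$ (or symmetrically $v_{i-1}v_{i-2}$), and deleting that edge still works when $w=v_{i+2}$ (one new leaf, none lost) or when $w\in\{v_0,v_\ell\}$.
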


\ifshort
\else
\begin{proof}
Consider a spanning tree $T$ of $G$ with $s$ leaves.
Moreover, suppose that $P=(x_1,\ldots,x_d)$ is a degree-2-path of $T$ with $d\geq 7$, and a strictly internal vertex $v \in P$ is adjacent to a vertex $w$ such that $vw \notin E(T)$.
We assume that $T$ is rooted in $x_1$ and prove the statement by considering the following list of mutually exhaustive cases.
\begin{description}
	\item[Case (i):] $w$ is neither an ancestor nor a descendent of $v$ in $T$. Then $x_1$ is the least common ancestor of $v$ and $w$ in $T$.
	Consider the unique path $P_{x_1, w}$ between $x_1$ and $w$ in $T$ and the unique path $P_{x_1, v}$ between $x_1$ and $v$ in $T$.
	As $v$ is a strictly internal vertex of the degree-2-path $P$,  $P_{x_1, v}$ contains the degree-2-path between $x_1$ and $v$ such that $v$ is a vertex in the subpath $(x_4,\ldots,x_{d-3})$. 
	We add $vw$ into $T$. 
	Note that the paths $P_{x, v}, P_{x, w}$ and the edge $vw$ create a unique cycle containing the edge $x_2x_3$.
	We remove the edge $x_2 x_3$ and obtain a spanning tree $\widehat{T}$.
	Note that $x_2$ and $x_3$ are two internal vertices of $P$ that are two additional leaves of $\widehat{T}$ that were not leaves in $T$.
	Every other internal vertex of $T$, apart from $x_2$ and $x_3$ has remained an internal vertex in $\widehat{T}$.
	In case $w$ is a leaf,  $w$ becomes an internal vertex in $\widehat{T}$, but $x_2$ and $x_3$ become two new leaves in $\widehat{T}$.
	Therefore, $\widehat{T}$ has at least one more leaf than $T$.
	In this case, $\widehat{T}$ has at least $s + 1$ leaves.
	Similarly, when $w$ is not a leaf, it follows that $x_2$ and $x_3$ are the only two additional leaves in $\widehat{T}$ that were not leaves in $T$.
	Also, $w$ remains an internal vertex in $\widehat{T}$ and every other internal vertex of $T$ remains internal in $\widehat{T}$.
	This implies that $\widehat{T}$ has at $s + 2$ leaves and $w$ is an internal vertex of $\widehat{T}$.
	Moreover, $x_2$ and $x_3$ are two internal vertices of $P$ that are two leaves in $\widehat{T}$.
	We refer to \Cref{fig:MLST-boundedness-case-1} for an illustration.
	\item[Case (ii):] $w$ is a descendant of $v$ in $T$. Suppose that $w$ is a descendant of $x_d$. Because $v$ is a strictly internal vertex of $P$, $(v,\ldots,x_{d-2},x_{d-1},x_d)$ is 	a subpath of the path $P_{v,w}$ between $v$ and $w$. Then we construct $\widehat{T}$ by adding  $vw$ and deleting $x_{d-1}x_{d-2}$. 	This way, we obtain a tree where both 	$x_{d-2}$ and $x_{d-1}$ become leaves. Moreover, only these vertices are new leaves of $\widehat{T}$. 
	The addition of $vw$ can decrease the number of leaves by at most one. Therefore, $\widehat{T}$ has at least $s+1$ leaves.
	Suppose than $w=x_i$ for some $i\in[d-1]$. Note that $x_i$ is an internal vertex of $T$. Then we construct $\widehat{T}$ by adding  $vw$ and deleting $x_{i-1}x_i$.  Since 		$P_{v,w}$ is of length at least two, this creates a tree with one additional leaf $x_{i-1}$. 
	We refer to the left part of \Cref{fig:lemma-boundedness-case-2} for an illustration.
	\item[Case (iii):] $w$ is an ancestor of $v$. Then $w=x_i$ for some $i\in[\ell]$. If $i\geq 2$ then we construct $\widehat{T}$ from $T$ by 		adding $vw$ and deleting $x_ix_{i+1}$.   Since $w$ is an internal vertex of $T$ and the path $P_{v,w}$ between $v$ and $w$ is of length at least two, $x_{i+1}$ becomes an additional leaf. If $w=x_1$, we construct $\widehat{T}$ by adding $vw$ and deleting $x_2x_3$. Similarly to Case (ii), both $x_2$ and $x_3$ are leaves of $\widehat{T}$ and only $w$ may become an internal vertex.     
	We refer to the right part  of \Cref{fig:lemma-boundedness-case-2} for an illustration.
\end{description}
The second part of the claim that  if $z$ is a leaf of $\widehat{T}$ that is not a leaf of $T$, then $z$ is an internal vertex of $P$ immediately follows from the above construction of $\widehat{T}$.
This completes the proof.
\end{proof}

\begin{figure}[t]
\centering
	\includegraphics[scale=0.3]{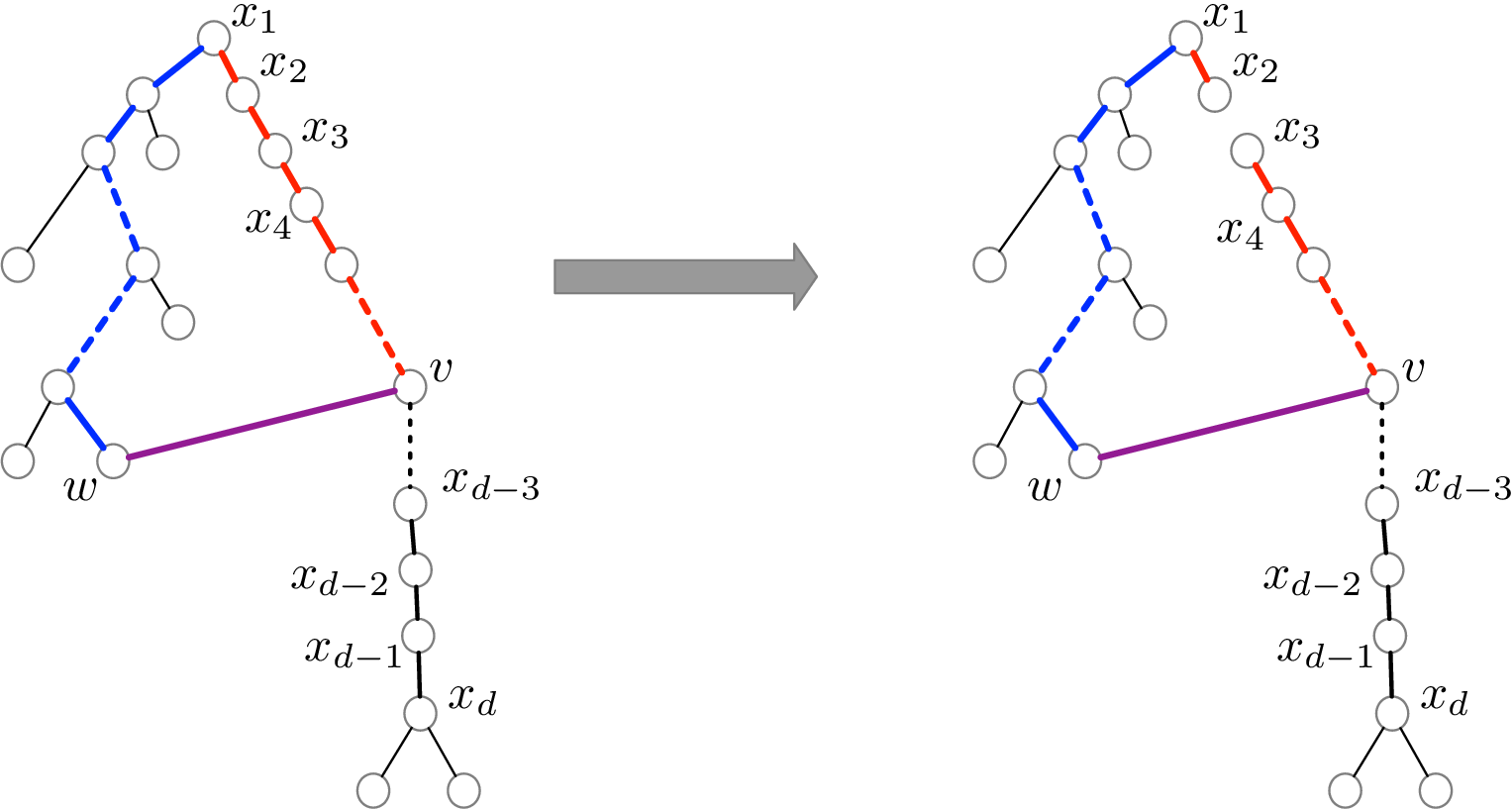}
	\caption{An illustration of \Cref{lemma:non-tree-edge-increase-leaves} proof Case (i). The tree on the left is $T$ that does not contain the pink purple colored edge $vw$ which is an edge of the graph. One of the red colored edges, $x_2 x_3$ from $T$ is removed  and the purple colored edge $vw$ has been added to create $\widehat{T}$.}
\label{fig:MLST-boundedness-case-1}	
\end{figure}

\begin{figure}[t]
\centering
	\includegraphics[scale=0.3]{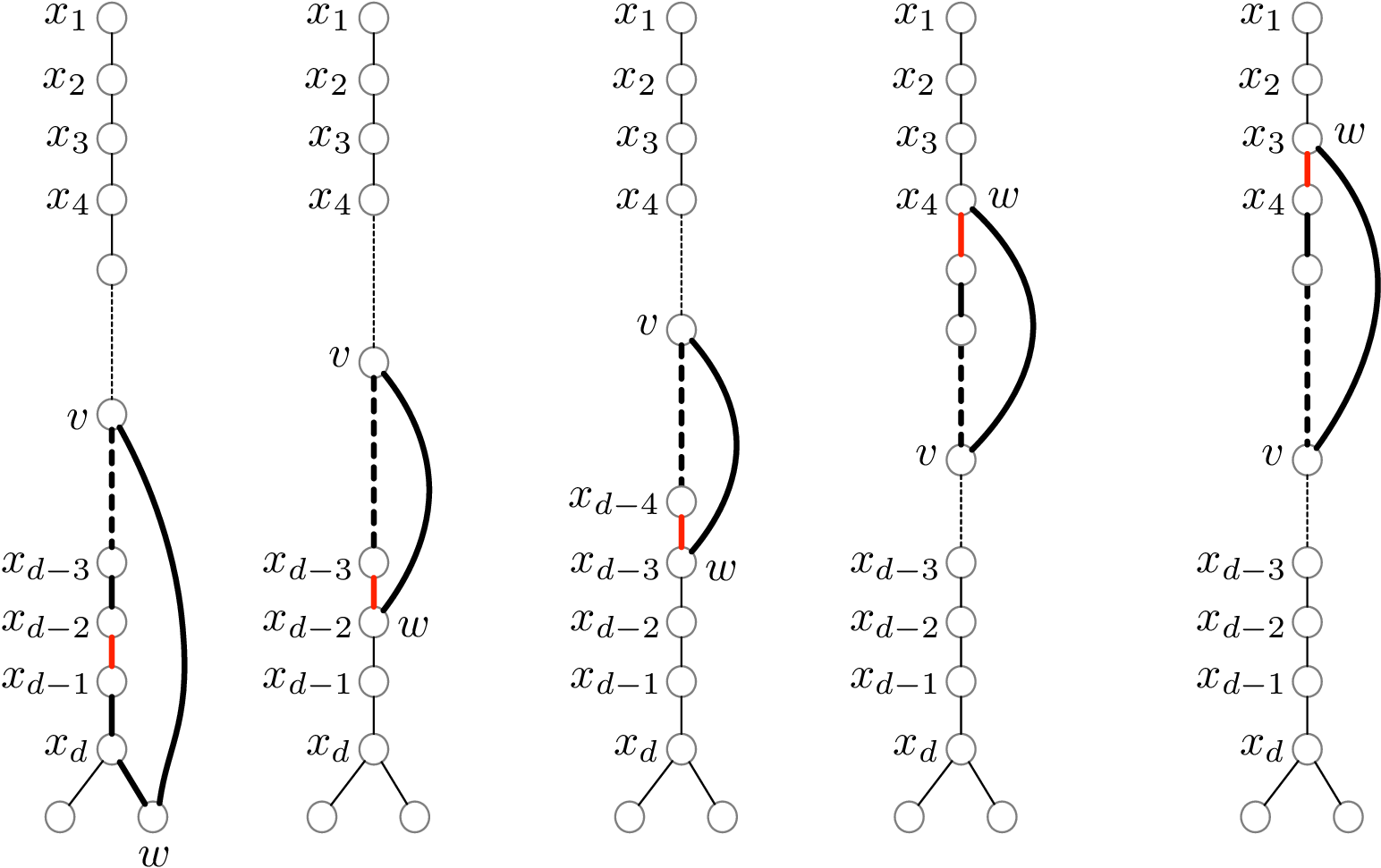}
	\caption{An illustration of Cases (ii) and (iii) of \Cref{lemma:non-tree-edge-increase-leaves} proof. It highlights what are the edges present in $T$ and the thick black colored edge $vw$ is not present in the tree. The red colored edge is removed from $T$ and $vw$ is added to create $\widehat{T}$.}
\label{fig:lemma-boundedness-case-2}
\end{figure}
\fi

We use this lemma to obtain the following lower bound for the maximum number of leaves in a spanning tree. 

\begin{lemma}
\label{lem:leavesbound}
Let $s\geq 2$ and $\ell\geq 1$ be integers. Let also  $G$ be a graph with a (possibly empty) subset of vertices $V_{\rm NT}\subseteq V(G)$ such that (i) $G$ has a spanning tree $T$ where each vertex of $V_{\rm NT}$ is internal and (ii) there is no  degree-$2$-path in $G$ of length at least $s$ with all internal vertices in $V(G)\setminus V_{NT}$.
Then 
	\begin{itemize}
	\item either $|V(G)|<(2\ell+|V_{\rm NT}|)(s+3)$
	\item or $G$ has a spanning tree $T$ with at least $\ell$ leaves such that  each vertex of $V_{\rm NT}$ is internal in $T$.
	\end{itemize}
\end{lemma}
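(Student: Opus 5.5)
Take, among all spanning trees of $G$ in which every vertex of $V_{\rm NT}$ is internal (condition (i) guarantees that one exists), a tree $T$ with the maximum number of leaves. Call this number $L$. If $L\geq \ell$ we are in the second alternative, so assume $L\leq \ell-1$. The goal is then to show $|V(G)|<(2\ell+|V_{\rm NT}|)(s+3)$. We may assume $|V(G)|\geq 3$.

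\textbf{Step 1: decompose $T$.} Let $V_1$ and $V_{\geq 3}$ be the leaves and branch vertices of $T$, and let $B=V_1\cup V_{\geq 3}\cup V_{\rm NT}$. By Observation~\ref{obs:leavesvsint}, $|V_1\cup V_{\geq 3}|\leq 2L-2$, so $|B|\leq 2L-2+|V_{\rm NT}|$. Removing $B$ from $T$ leaves a family of paths consisting of vertices of $T$-degree two outside $V_{\rm NT}$. Each such path corresponds to a degree-2-path of $T$ whose end-vertices lie in $B$ and whose internal vertices avoid $V_{\rm NT}$. These paths are in bijection with edges of a tree on vertex set $B$, so there are at most $|B|-1$ of them.

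\textbf{Step 2: long $T$-paths are long $G$-paths.} Let $P$ be such a maximal degree-2-path of $T$ of length at least $s+3\geq 6$. I claim every strictly internal vertex $v$ of $P$ has degree two in $G$. Suppose instead that $v$ has a $G$-neighbour $w$ with $vw\notin E(T)$. Lemma~\ref{lemma:non-tree-edge-increase-leaves} then gives a spanning tree $\widehat T$ with at least $L+1$ leaves whose new leaves are internal vertices of $P$, hence not in $V_{\rm NT}$. Leaves of $T$ are not in $V_{\rm NT}$, so every vertex of $V_{\rm NT}$ stays internal in $\widehat T$. This contradicts the maximality of $L$.

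Consequently, the strictly internal vertices $v_3,\dots,v_{\ell'-3}$ of $P$ all have $G$-degree two and lie outside $V_{\rm NT}$. Then $(v_2,\dots,v_{\ell'-2})$ is a degree-2-path of $G$ of length $\ell'-4$ whose internal vertices avoid $V_{\rm NT}$. By (ii), $\ell'-4<s$, i.e.\ every such path has length $\ell'\leq s+3$. Short paths satisfy this bound trivially.

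\textbf{Step 3: count.} Each path from Step 1 contributes at most $s+2$ vertices outside $B$. Therefore
\[
|V(G)|\leq |B|+(|B|-1)(s+2)<|B|(s+3)\leq (2\ell+|V_{\rm NT}|)(s+3).
\]

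\textbf{Main obstacle.} The delicate point is the index bookkeeping in Step 2. One must check that the strictly internal range of $P$, and the neighbours of its end vertices, line up with the definition of degree-2-path in (ii), so that the bound comes out as $s+3$ and not something slightly larger. If the constants do not match exactly, the slack between $2L-2$ and $2\ell$ (we have $L\leq\ell-1$) absorbs a few extra vertices per path. A second point to verify is that the $V_{\rm NT}$ vertices are correctly treated as breakpoints, so that condition (ii) applies to the path segments.
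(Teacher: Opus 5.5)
Your proof is correct and follows the paper's argument essentially step for step. Like the paper, you take a leaf-maximal tree keeping $V_{\rm NT}$ internal and form $B=V_1\cup V_{\geq 3}\cup V_{\rm NT}$ with at most $|B|-1$ connecting degree-2-paths, then apply Lemma~\ref{lemma:non-tree-edge-increase-leaves} to the subpath $(v_2,\ldots,v_{r-2})$ to cap each path at $s+2$ internal vertices, and finish with the same count.

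Two cosmetic nits. First, $s+3\geq 5$ rather than $6$, so Step~2 should be stated for paths of length at least $s+4\geq 6$, which is all you need to exclude. Second, the paths inject into the edges of the suppressed tree on $B$ rather than biject with them, unless you also count direct $T$-edges between vertices of $B$.
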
 

\begin{proof}
Suppose that $T$ is a spanning tree of $G$ with the maximum number of leaves such that each vertex of $V_{\rm NT}$ is internal, and assume that $T$ has at most $\ell-1$ leaves.
We prove that $|V(G)|<(2\ell+|V_{\rm NT}|)(s+3)$.  
Let $V_1$ and $V_{\geq 3}$ be the sets of vertices of degree one (i.e., leaves) and at least three in $T$, respectively. 
Consider $W=V_1\cup V_{\geq 3}\cup V_{\rm NT}$.
By Observation \ref{obs:leavesvsint}, $|W|\leq |V_1|+|V_{\geq 3}|+|V_{\rm NT}|\leq 2\ell-3+|V_{\rm NT}|$.  Thus, at least $|V(G)|-|V_{\rm NT}|-2\ell+3$ vertices of $G$ have degree at most two and do not belong to $V_{\rm NT}$. 
Notice that each of these vertices is an internal vertex of some degree-2-path in $T$.
Let $\mathcal{P}$ be the set of all inclusion maximal degree-2-paths in $T$ whose internal vertices are in $V(G)\setminus W$.
Then for each $v\in V(G)\setminus W$, there is $P\in \mathcal{P}$ such that $v$ is an internal vertex of $P$. 
Because $T$ is a tree and each $P\in\mathcal{P}$ has its end-vertices in $W$, we have  that $|\mathcal{P}| = |W| - 1 \leq 2\ell-4+|V_{\rm NT}|$. 
Let $P=(v_0,\ldots,v_r)\in \mathcal{P}$. Assume first that $r\geq 6$. 
{Consider the strictly internal vertices of $P$ that are $v_3,\ldots,v_{r-3}$.
As $r \geq 6$, it follows that $3 \leq r-3$.
Hence, we consider $P'=(v_2,\ldots,v_{r-2})$.
Observe that every internal vertex of $P'$ is strictly internal to $P$.
If an internal vertex $v$ of $P'$ has a neighbor $w$ such that $vw$ is not an edge of $T$, then due to Lemma \ref{lemma:non-tree-edge-increase-leaves}, there exists a spanning tree $T^*$ in $G$ such that $T^*$ has at least $\ell$ leaves. 
And more importantly, due to Lemma \ref{lemma:non-tree-edge-increase-leaves}, if there is a leaf $z \in T^*$ that is not a leaf in $T$, then $z$ is an internal vertex of $P$.
Note that $V_{\rm NT}$ is disjoint from the internal vertices of $P$, because for every $P \in \PP$, the internal vertices of $P$ are from $V(G) \setminus W$ and $W \supseteq V_{\rm NT}$.
This creates a spanning tree $T^*$ of $G$ that has more leaves than $T$, and the vertices of $V_{\rm NT}$ remain internal in $T^*$, contradicting the choice that $T$ is a spanning tree of $G$ with maximum possible number of leaves.}
Thus, $P'$ is a degree-2-path in $G$ with internal vertices outside $V_{\rm NT}$. 
Because there is no degree-$2$-path in $G$ of length at least $s$ with all internal vertices in $V(G)\setminus V_{NT}$, we have that $r< s+4 $ and $P$ has at most $s+2$ internal vertices. 
If $r<6$ then because $s\geq 2$, $P$ also has at most $s+2$ internal vertices. Thus, we have that every $P\in\mathcal{P}$ has at most $s+2$ internal vertices.
This implies that
\begin{equation*}
\begin{aligned}
|V(G)|\leq &|W|+|\mathcal{P}|(s+2)\leq 2\ell-3+|V_{\rm NT}|+(2\ell-4+|V_{\rm NT}|)(s+2)\\
=&(2\ell-4+|V_{\rm NT}|)(s+3)+1< (2\ell+|V_{\rm NT}|)(s+3).
\end{aligned}
\end{equation*}
This concludes the proof.
\end{proof}

We use the next lemma to argue that the existence of sufficiently many leaves in a spanning tree that have degrees at least two in the input graph guarantees the existence of a family of $k$-diverse spanning trees.

\begin{lemma}
\label{lem:diverse}
Let $k,\ell\geq 1$ be integers, and let $G$ be a graph with a given subset of vertices $V_{\rm NT}$. 
Let also $L$ be a set of vertices of degree at least two such that (i) $|L|\geq 2\lceil\frac{k}{4}\rceil\ell$, (ii) $G$ has a spanning tree $T$ such that every vertex of $V_{\rm NT}$ is internal,  
$L$ is a subset of the set of leaves of $T$, and each vertex of $V_{\rm NT}$ is adjacent to at least two vertices of  $V(G)\setminus L$ in $T$.
Then $G$ has  $\ell$ spanning trees $T_1,\ldots,T_\ell$ that are {\kDiv}.
Moreover, the vertices of $V_{\rm NT}$ are internal in each $T_i$.
Subsequently, if $T$ has $s$ leaves then for each $i\in[\ell]$, $T_i$ has at least $s-\lceil\frac{k}{4}\rceil$ leaves.  
\end{lemma}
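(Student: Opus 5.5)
Set $t=\lceil\frac{k}{4}\rceil$. Since $|L|\geq 2t\ell$, we choose $\ell$ pairwise disjoint subsets $L_1,\ldots,L_\ell\subseteq L$, each of size exactly $2t$. The plan is to build $T_i$ from $T$ by rerouting only the leaves of $L_i$: for each $v\in L_i$ we drop the unique tree edge of $v$ and attach $v$ through a different edge of $G$. Each rerouted leaf changes exactly two edges of the symmetric difference with $T$, namely the removed edge and the added one. We will arrange that $E(T_i)\triangle E(T)$ has exactly $2\cdot 2t=4t\geq k$ edges, all incident to $L_i$. For $i\neq j$ the sets $L_i$ and $L_j$ are disjoint, so $E(T_i)\triangle E(T_j)=(E(T_i)\triangle E(T))\cup(E(T_j)\triangle E(T))$. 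This gives $|E(T_i)\triangle E(T_j)|\geq 8t\geq k$, which is even more than we need. The only thing to check here is that the edge sets changed for $L_i$ and for $L_j$ are disjoint.

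The rerouting is done as follows. Let $v\in L$ with tree neighbour $u$. Since $\dg_G(v)\geq 2$, $v$ has another neighbour $w\neq u$ in $G$. If $w\notin L$, then $w$ is internal in $T$ or at least not a leaf in $L$. Replacing $uv$ by $vw$ keeps a spanning tree, because $v$ is a leaf, so $T-v$ is still a tree and we simply reattach $v$. If all other neighbours of $v$ lie in $L$, the situation is harder, because two leaves of $L$ could be attached to each other. To avoid this, I would first preprocess $T$. Among spanning trees satisfying (ii) with the same leaf set containing $L$, every $v\in L$ has some alternative edge $vw$. If $w\in L$, the edge $vw$ still works as long as we do not also reroute $w$ in the same tree and $w$ stays attached to $T-v$. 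So within one tree $T_i$ we reroute the leaves of $L_i$ one at a time: each $v$ is reattached to a vertex of the current tree minus $v$. A leaf $w\in L_i$ that has already been moved is still present in the tree, so $T_i$ remains a spanning tree. The main obstacle is disjointness across different $i$: an edge $vw$ with $v\in L_i$ and $w\in L_j$ could be added in both $T_i$ and $T_j$. Even in that case, the removed edges $uv$ and $u'w$ are distinct and appear in the respective differences. Counting only the removed tree edges, which are distinct across all of $L$ because each is the unique pendant edge of its leaf, already gives $|E(T_i)\triangle E(T_j)|\geq |L_i|+|L_j|=4t\geq k$. So using only the deleted edges resolves the obstacle cleanly.

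Next, the vertices of $V_{\rm NT}$ must stay internal. Rerouting can only lower the degree of the old attachment point $u$. If $u\in V_{\rm NT}$, the hypothesis says $u$ has at least two tree neighbours outside $L$, and those are never removed. Hence $u$ keeps degree $\geq 2$ and stays internal. Vertices that receive new edges only gain degree. Finally, we count leaves. Each rerouting of $v$ keeps $v$ a leaf, can turn at most one vertex $u$ into a leaf (and only if $u\notin V_{\rm NT}$), and can turn the attachment vertex $w$ from a leaf into an internal vertex. So each move loses at most one leaf. Applied $2t$ times, this gives a weaker bound than $s-t$. To reach $s-\lceil\frac{k}{4}\rceil$, the plan is to use only $t$ rerouted leaves per tree, giving $|L_i|=t$, while keeping $\ell$ groups. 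The diversity then comes from both the removed and added edges: $2t+2t=4t\geq k$ for a pair $T_i,T_j$. This requires the added edges not to cancel across pairs, which we can argue holds because an added edge $vw$ is not a tree edge of $T$, and an added edge of $T_i$ lies in $T_j$ only if it is also added there. The remaining subtle point to nail down is handling this coincidence, namely $v\in L_i$ and $w\in L_j$ attaching to each other in both trees. We can rule it out by always choosing $w$ outside $L$ when possible, and otherwise using the count $2t+2t$ minus the shared edges, adjusting $t$ by a constant if needed.
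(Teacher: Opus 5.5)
Your proposal has a genuine gap. It never controls how a rerouted leaf's new neighbour interacts with the other rerouted leaves, and both the tree property and the diversity count fail exactly there.

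Inside one group, ``reattach $v$ to the current tree minus $v$'' presumes that $v$ is still a leaf. That fails once an earlier $v'\in L_i$ has been hung on $v$. For example, let $v,v'\in L_i$ with $N_G(v)=\{p_v,v'\}$ and $N_G(v')=\{p_{v'},v\}$. Rerouting both deletes $p_vv$ and $p_{v'}v'$ but adds only the single edge $vv'$, so $\{v,v'\}$ is cut off. A cyclic choice $v_1\to v_2\to v_3\to v_1$ inside $L_i$ fails in the same way. Your earlier condition ``as long as we do not also reroute $w$ in the same tree'' was the right one, but you then drop it.

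Across groups the problem reappears. Once you shrink to $|L_i|=t$ to get the bound $s-t$, the deleted edges contribute only $2t$, so you need the added edges of $T_i$ and $T_j$ to be distinct. Nothing in your choice of the $L_i$ prevents a mutual pair $v\in L_i$, $w\in L_j$ that choose each other; then the same edge $vw$ lies in both trees. If all of $L_i$ is paired with $L_j$ this way, you get only $2t<k$ (e.g.\ $k=4$). ``Choose $w$ outside $L$ when possible'' says nothing when it is impossible. You also cannot ``adjust $t$ by a constant'', since both $|L|\ge 2t\ell$ and the bound $s-t$ are fixed by the statement.

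The missing idea is to spend the factor $2$ in $|L|\ge 2t\ell$ on \emph{selecting} which leaves to reroute.
\begin{itemize}
\item The paper fixes, for each $v\in L$, an alternative neighbour $q_v\ne p_v$, chosen so that the number of distinct edges $vq_v$ is minimum.
\item With this choice, the graph $H$ on $L$ with $uv\in E(H)$ iff $u=q_v$ or $v=q_u$ is a forest. A cycle of $H$ is a directed cycle of choices of length at least $3$. If $q_{v_2}=v_1$ and $q_{v_3}=v_2$ on it, resetting $q_{v_2}:=v_3$ removes $v_1v_2$ from the count, contradicting minimality.
\item A bipartition class of $H$ gives $L'\subseteq L$ with $|L'|\ge t\ell$ and $q_v\notin L'$ for every $v\in L'$.
\item Split $L'$ into $\ell$ groups of size $t$. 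Each $v\in L_i$ is still a leaf when rerouted, and its new neighbour is never rerouted in any tree. Hence each $T_i$ is a tree, and all added edges are distinct and outside $E(T)$.
\item This gives $|E(T_i)\triangle E(T_j)|=2(|L_i|+|L_j|)=4t\ge k$. Only the at most $t$ vertices $q_v$ can stop being leaves, so each $T_i$ has at least $s-t$ leaves.
\end{itemize}
Your arguments for $V_{\rm NT}$ and for the per-move leaf loss are fine once this selection is in place.
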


\begin{proof}
For every vertex $v\in L$, denote by $p_v$ its unique neighbor in $T$.
Because every vertex $v\in L$ has degree at least two in $G$, $v$ has a neighbor $q_v$ in $G$ such that $q_v \neq p_v$. We assume that the vertices $q_v$ are selected in such a way that the total number of edges $uv\in E(G)$ such that $u=q_v$ or $v=q_u$ is minimum.
We show the following claim.

\begin{claim}
\label{cl:ind}
There is a set $L'\subseteq L$ of size at least $\lceil\frac{k}{4}\rceil\ell$ such that for every $v \in L'$,  
$q_v \notin L'$.
\end{claim}

\begin{claimproof}
Consider the graph $H$  such that
\begin{itemize}
	\item $V(H)=L$, and
	\item two distinct vertices $u,v\in V(H)$ are adjacent in $H$ if and only if $u=q_v$ or $v=q_u$.
\end{itemize}  
We claim that $H$ is a forest.

For the sake of contradiction, assume that $H$ has a cycle $C=(v_1,\ldots,v_r)$. Note that since $v_1,\ldots,v_r\in L$, $p_{v_i}\notin\{v_1,\ldots,v_r\}$ for $i\in[r]$. 
By symmetry, we assume without loss of generality that $q_{v_2}=v_1$.
Then because $v_2v_3\in E(H)$, $q_{v_3}=v_2$. 
Iterating, we obtain that $q_{v_1}=v_r$.
We modify the choice of $q_{v_2}$ and assign $q_{v_2}:=v_3$.
In this modified choice, observe that neither $q_{v_1}=v_2$ nor $q_{v_2}=v_1$, implying that $v_1v_2 \notin E(H)$.
Thus, we decreased the total number of edges $uv\in E(G)$ such that   $u=q_v$ or $v=q_u$.
However, this contradicts the choice of $q_v$ for $v\in L$.
Therefore, $H$ is a forest.
Since $H$ is a forest, $H$ is bipartite.
Therefore, $H$ has an independent set $L'$ of size at least $\frac{|V(H)|}{2}=\frac{|L|}{2}\geq \lceil\frac{k}{4}\rceil\ell$. 
By the definition of  $H$, we obtain that $q_v\notin L'$ for each $v\in L'$. This concludes the proof.
\end{claimproof}

Consider the set $L'$ which exists by \Cref{cl:ind}. Because $|L'|\geq \lceil\frac{k}{4}\rceil\ell$, there are $\ell$ disjoint subsets $L_1,\ldots,L_\ell\subseteq L'$ such that $|L_i|=\lceil\frac{k}{4}\rceil$ for each $i\in[\ell]$.  For every $i\in[\ell]$, we define $T_i$ be a spanning subgraph of $G$ with $E(T_i)=(E(T)\cup\{vq_v\mid v\in L_i\})\setminus \{vp_v\mid v\in L_i\}$.
 Notice that constructing $T_i$ can be seen as modifying $T$ by consecutively adding the edge $vq_v$ and deleting $vp_v$ for $v\in L_i$.
 Then applying inductive arguments, it can be seen that the graph obtained in each step is a tree, as the addition of $vq_v$ creates the unique cycle containing $vq_v$, which is destroyed by deleting $vp_v$. 
 Thus, we conclude that for  each $i\in[\ell]$, $T_i$ is a tree. 
 Because $L_i$ and $L_j$ are disjoint and $vq_v\neq uq_u$ for $v\in L_i$ and $u\in L_j$, for distinct $i,j\in[\ell]$, $|E(T_i)\triangle E(T_j)|=2(|L_i|+|L_j|)\geq 4\lceil\frac    {k}{4}\rceil\geq k$. Thus, $T_1,\ldots,T_\ell$ are {\kDiv} spanning trees of $G$. 
 Since each vertex of $V_{\rm NT}$ has at least two neighbors outside $L$, we have that the vertices of $V_{\rm NT}$ are internal in every $T_i$ ($1 \leq i \leq \ell$).

 To see the second claim of the lemma, 
 by assumption, $T$ has $s$ leaves.
 Notice that in each $T_i$ for $i\in[\ell]$, the vertices of $L_i$ are leaves of $T_i$. However, the vertices $q_v$ for $v\in L_i$ may be leaves of $T$ which become internal vertices of $T_i$.
 However, we have at most $\lceil\frac{k}{4}\rceil$ such vertices.
 Then  for each $i\in[\ell]$, $T_i$ has at least $s-\lceil\frac{k}{4}\rceil$ leaves where $s$ is the number of leaves of $T$. This concludes the proof.   
\end{proof}

The next lemma is used for reducing long degree-$2$-paths.

\begin{lemma}
\label{lem:long}
Let $G$ be a graph with a given subset of vertices $V_{\rm NT}\subseteq V(G)$. Let also $p,q\geq 0$ and $k,\ell\geq 1$ be integers.
Suppose that $G$ contains a degree-$2$-path $P$ of length at least $\ell+3$ whose internal vertices {are disjoint from} $V_{\rm NT}$ and denote 
by $G'$ the graph obtained by the contraction of an edge of $P$. Then the following two statements are equivalent:
\begin{enumerate}
\item $G$ has $\ell$ {\kDiv} spanning trees such that 
\begin{itemize}
	\item[(i)] the vertices of $V_{\rm NT}$ are internal vertices of each tree, and 
	\item[(ii)] every tree has at least $p$ leaves and at least $q$ internal vertices
\end{itemize}
\item  $G'$ has $\ell$ {\kDiv} spanning trees such that
\begin{itemize}
	\item[(i)] the vertices of $V_{\rm NT}$ are internal vertices of each tree,
	\item[(ii$^*$)] every tree has at least $p$ leaves and at least $\max\{0,q-1\}$ internal vertices.
\end{itemize}
\end{enumerate}
\end{lemma}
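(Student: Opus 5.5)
Let $P=(v_0,\ldots,v_r)$ with $r\ge \ell+3$, and let $e=v_iv_{i+1}$ be the contracted edge, so that $G'$ arises by merging $v_i,v_{i+1}$ into one vertex $x$. The plan is to set up a correspondence between spanning trees of $G$ and of $G'$ through the path. The key fact is that every spanning tree of $G$ omits at most one edge of $P$ (which is a cycle's worth or path's worth of degree-2 vertices). Indeed, omitting two edges of $P$ would isolate the segment of internal vertices between them, since those vertices have degree two in $G$. The same holds in $G'$ for the shortened path $P'$, which has length $r-1\ge \ell+2$.

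\textbf{Direction $2\Rightarrow 1$.} Given a spanning tree $T'$ of $G'$, we define $\phi^{-1}(T')$ by uncontracting: replace $x$ by $v_i,v_{i+1}$, reattach the edges at $x$ to $v_i$ or $v_{i+1}$ as in $G$, and add $e$. This is a spanning tree of $G$. Its edge set is $E(T')\cup\{e\}$, identifying the edges of $P'$ with $E(P)\setminus\{e\}$. Hence symmetric differences are preserved exactly, and distinctness is preserved as well.

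Now compare leaves and internal vertices. The vertex $x$ has $T'$-degree equal to $\deg(v_i)+\deg(v_{i+1})-2$, and the other degrees are unchanged. Internal vertices of $P$ have degree at most $2$, so if $x$ is a leaf, then exactly one of $v_i,v_{i+1}$ becomes a leaf and the other becomes internal. If $x$ is internal, both are internal except for the case where one of them is an endpoint $v_0$ or $v_r$; I will choose $e$ to be an interior edge (possible since $r\ge 4$) so that both ends are internal vertices of $P$. Thus the number of leaves is unchanged, and the number of internal vertices increases by exactly one. Vertices of $V_{\rm NT}$ are not internal vertices of $P$, so their status is unchanged. This gives $q-1+1=q$ internal vertices (when $q=0$ there is nothing to check).

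\textbf{Direction $1\Rightarrow 2$.} This is the main obstacle, because trees omitting $e$ itself have no canonical contraction. For a tree $T$ containing $e$, contracting gives $T/e$, which is the inverse of the map above, with the same leaf count and one fewer internal vertex. For a tree $T$ omitting $e$, we instead omit some other edge of $P$: replace $T$ by $T-f+e$ for a suitable edge $f$ of $P$ and then contract. This changes the symmetric differences with other trees, so we must choose the $f$'s carefully.

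The plan is to use the slack $r\ge \ell+3$. Among the $\ell$ trees, consider which edge of $P$ each omits (if any). Since $P$ has at least $\ell+3$ edges and there are only $\ell$ trees, there is an edge $e^*$ of $P$, not incident to $v_0$ or $v_r$, that is contained in every $T_j$. Contracting an edge of $P$ yields isomorphic graphs regardless of which edge of $P$ we pick, up to the position of the merged vertex on the path. So it suffices to prove the equivalence with $e^*$ in place of $e$. Then every $T_j$ contains $e^*$, and contracting gives $T_j/e^*$, with symmetric differences preserved, $p$ leaves kept, at least $q-1$ internal vertices, and $V_{\rm NT}$ still internal.

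The points to verify are the following. First, the ``at most one omitted edge per tree'' claim. Second, that an interior edge $e^*$ exists: this follows from counting $r-2\ge \ell+1>\ell$ interior edges against $\ell$ omitted edges. Third, that the contracted graphs for different choices of edge on $P$ are isomorphic via an isomorphism fixing $V_{\rm NT}$. This holds because $P$'s internal vertices avoid $V_{\rm NT}$, and the endpoint degenerate case $v_0=v_r$ is handled identically.
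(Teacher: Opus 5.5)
Your proof is correct and follows the paper's approach. For $1\Rightarrow 2$ you note that each tree omits at most one edge of $P$, pigeonhole over the $r-2\ge \ell+1$ interior edges to find one edge lying in all $\ell$ trees, use that contracting any edge of $P$ gives the same graph up to isomorphism, and contract that edge in every tree; for $2\Rightarrow 1$ you uncontract an interior edge, where your degree-sum argument replaces the paper's three-case analysis on $(v_0,u,v_3)$, and the $T-f+e$ swapping idea you raise first is never needed.
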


\begin{proof}
Suppose that $P=(v_0,\ldots,v_s)$ for some $s\geq \ell+3$. 

For the forward direction ($\Rightarrow$), assume that 
$T_1,\ldots,T_{\ell}$ are {\kDiv} spanning trees of $G$ satisfying (i) and (ii).
Notice that at most one edge from $\{v_1 v_2,\ldots,v_{s-2}v_{s-1}\}$  is not an edge of $T_i$. 
As $s \geq \ell + 3$ and there are $s-2$ edges in $\{v_1 v_2,\ldots,v_{s-2}v_{s-1}\}$, hence, there are at least $\ell + 1$ edges in $\{v_1 v_2,\ldots,v_{s-2}v_{s-1}\}$.
Therefore, there is an edge $e$ from $\{v_1 v_2,\ldots,v_{\ell + 1}v_{\ell + 2}\}$ that is present in each of the trees $T_1,\ldots,T_{\ell}$.

As $P$ is a degree-2-path in $G$, we can assume without loss of generality that $G'$ is obtained by contracting the edge $e$ 
(note that regardless of which edge from $P$ is contracted, the resulting graph falls in the same isomorphism class).
Let $T_1',\ldots,T_\ell'$ be the graph obtained from $T_1,\ldots,T_\ell$, respectively, by contracting an edge $e$ from $P$. 
Notice that $T_1',\ldots,T_\ell'$ are spanning trees of $G'$ because these trees and $G'$ were obtained by contracting the same edge $e$.
Also, because $e$ is an edge of every tree, we have that 
$|E(T_h')\triangle E(T_i')|=|E(T_h)\triangle E(T_i)|$ for all $h,i\in[\ell]$.
Then $T_1',\ldots,T_\ell'$ are {\kDiv} spanning trees.
Since the endpoints of $e$ are internal vertices of $P$, the vertices of $V_{\rm NT}$ are internal vertices of each tree $T_1',\ldots,T_\ell'$ and (i) holds.  To see (ii$^*$), consider an arbitrary $i\in[\ell]$. As $e=v_{j-1}v_j\in E(T_i)$ for $j\in\{2,\ldots,\ell+2\}$ and $s\geq \ell+3$, we have that either $v_{j-2}v_{j-1},v_{j}v_{j+1}\in E(T_i)$, or $v_{j-2}v_{j-1}\in E(T_i)$ and $v_{j}v_{j+1}\notin E(T_i)$, or $v_{j-2}v_{j-1}\notin E(T_i)$ and $v_{j}v_{j+1}\in E(T_i)$. In all these cases, the contraction of $e$ reduces the number of internal vertices by one but does not change the number of leaves. This implies (ii${^*}$).

For the opposite direction ($\Leftarrow$), assume without loss of generality that $G'$ was obtained from $G$ by contracting $e=v_1v_2$, and denote by $u$ the vertex obtained from $v_1$ and $v_2$ by the contraction of edge $v_1 v_2$.
Let 
$T_1',\ldots,T_\ell'$ be {\kDiv} spanning trees of $G'$ satisfying (i) and (ii$^*$). We construct the spanning trees $T_1,\ldots,T_\ell$ from $T_1',\ldots,T_\ell'$, respectively, by ``uncontracting'' $u$. Formally, let $i\in[\ell]$. Notice that $s\geq\ell+3\geq 3$.
Then three cases arise concerning the degree-2-path $(v_0, u, v_3)$ in $G'$.
In the first case, $T_i'$ contains the path $(v_0,u, v_3)$.
Then, we replace $(v_0,u,v_3)$ in $T_i'$ by $(v_0,v_1,v_2,v_3)$ to construct  $T_i$.
In the second case, $v_0u\in E(T_i')$ and $uv_3\notin E(T_i')$.
Then, we replace the edge $v_0u$ by the path $(v_0,v_1,v_2)$.
In the third case, $v_0u\notin E(T_i')$ and $uv_3\in E(T_i')$. 
Then, we replace the edge $uv_3$ by $(v_1,v_2,v_3)$.

For every spanning tree $T'$ of $G'$, there is at most one edge from $(v_0, u, v_3)$ that is not part of $T'$, therefore, these cases are exhaustive.

Because the vertices of $V_{\rm NT}$ are internal vertices of $T_i'$, and due to (i), $V_{\rm NT}$ is disjoint from the internal vertices of $P'$, we have that the vertices of $V_{\rm NT}$ are internal vertices of $T_i$ and disjoint from the internal vertices of $P$.
Thus, (i) holds. For (ii), it is sufficient to observe that the construction of $T_i$ from $T_i'$ does not decrease the number of leaves and increases the number of internal vertices by one. This completes the proof.
\end{proof}

\section{Kernelization Algorithms}\label{sec:kern}
In this section, we use the technical lemmata proved in the previous section to give a proof of our kernelization results.

\subsection{Kernel for {\DivLIST}}

First, we consider \dLIST and prove \Cref{thm:kernLIST}, which we restate.

{\LIST*}

\begin{proof}
Let $(G,p,q,k,\ell)$ be an instance of {\dLIST}.
For this problem, it is implicitly assumed that $V_{\rm NT} = \emptyset$ here.
We assume that $G$ is a connected graph.
Otherwise, $G$ has no spanning tree, and we can immediately return a trivial no-instance. 
We also assume that $G$ is not a tree.
Otherwise, if $G$ is a tree, then $G$ has a unique spanning tree.
It is straightforward to solve {\dLIST} and return either a trivial yes or no-instance. 

So, we are in the case that $G$ is a connected graph and $G$ is not a tree.
Next, we exhaustively apply the following two reduction rules.

\begin{reductionrule}
\label{rule:contract}
If $G$ contains a degree-$2$-path $P$ of length at least $\ell+3$ then contract an edge of $P$ and set $q=\max\{0,q-1\}$.
\end{reductionrule}

As $V_{\rm NT}=\emptyset$, safety of the rule follows from \Cref{lem:long}.
For the next rule, notice that a pendant edge is included in any spanning tree.
Furthermore, since $G$ is not a tree, the unique neighbor of a pendant vertex is an internal vertex of every spanning tree.
These observations imply the safeness of the following rule.

\begin{reductionrule}
\label{rule:twoleaves}
If $G$ contains two pendant vertices $x$ and $y$ adjacent to the same neighbor then set $G:=G-x$ and $p=\max\{0,p-1\}$.
\end{reductionrule}

Denote by $h$ the number of pendant vertices  of $G$. 
Notice that after the exhaustive application of \Cref{rule:twoleaves}, any two pendant vertices have distinct neighbors.
If $G$ is an irreducible instance with respect to \Cref{rule:contract} and \Cref{rule:twoleaves} and has $h$ pendant vertices, then the total number of the neighbors of pendant vertices is $h$.
Notice that all pendant edges are included in every spanning tree of $G$.
If $G$ has at least $p$ pendant vertices (i.e. $h \geq p$), then every spanning tree of $G$ has at least $p$ leaves.
Also, because the neighbors of pendant vertices are internal in any spanning tree, the other endpoint of an edge that is not a pendant vertex is an internal vertex in any spanning tree of $G$.
Hence, if $G$ has at least $q$ vertices that are neighbors of pendant vertices, then each spanning tree has at least $q$ internal vertices. 
The above two facts allow us to apply the next rule. 

\begin{reductionrule}
\label{rule:leaves}~
\begin{itemize}
\item If the number of pendant vertices $h\geq p$ then set $p:=0$, and 
\item if $h\geq q$ then set $q:=0$.
\end{itemize}
\end{reductionrule}

Note that the application of the above reduction rules can produce distinct combinations of values of $p$ and $q$.
We divide these combinations into two cases.
The first case is when both $p = q = 0$ and the other case is when one of $p$ and $q$ is positive, i.e. $\max(p, q) > 0$.
Depending on the two cases, we design the subsequent reduction rules as follows.

\subparagraph*{Case~1:} 
In this case, \Cref{rule:leaves} assigns $p=q=0$. 
Then, we have no constraint on the number of leaves and the number of internal vertices in a spanning tree.
We can use the observation that every pendant edge is included in any spanning tree and exhaustively apply the following rule.

\begin{reductionrule}
\label{rule:delleaves}
If $G$ has a pendant vertex $v$ then set $G:=G-v$.
\end{reductionrule}

If $v$ is a pendant vertex and $uv$ is a pendant edge of $G$, then $uv$ is part of any spanning tree of $G$.
Then, the safeness of this reduction rule follows because $uv$ is an edge for any collection $T_1,\ldots,T_{\ell}$ of {\kDiv} spanning trees.
Deleting $v$ from $T_i$ creates $T_i'$ and we observe that $|E(T_i') \triangle E(T_j')| = |E(T_i) \triangle E(T_j)|$.
Hence, $T_1',\ldots,T_{\ell}'$ is a collection of {\kDiv} spanning trees, ensuring the forward direction of the safeness of the above reduction rule.
The backward direction can also be proved using similar arguments.

After applying the above-mentioned reduction rules, we have that $G$ has no pendant vertex.
Also, because the input graph is not a tree, the contraction of edges of degree-2-paths of length at least $3$ and 
the deletion of pendant vertices result in a graph of minimum degree of at least two. Thus, the minimum degree of $G$ is at least two. 
Then we apply the following reduction rule.

\begin{reductionrule}\label{rule:simplefinal}
If $|V(G)|<4\lceil\frac{k}{4}\rceil\ell(\ell+6)$ then return the instance $(G,p,q,k,\ell)$ and stop.
If $|V(G)| \geq 4\lceil\frac{k}{4}\rceil\ell(\ell+6)$, then return a trivial yes-instance.
\end{reductionrule}

To argue that the rule is safe, assume that $|V(G)|\geq 4\lceil\frac{k}{4}\rceil\ell(\ell+6)$.
We use the fact that after the exhaustive
application of \Cref{rule:contract}, $G$ has no degree-$2$-path of length $\ell+3$.
Now, we apply \Cref{lem:leavesbound} with $V_{\rm NT}=\emptyset$.
As $G$ has at least $4\lceil\frac{k}{4}\rceil\ell(\ell+6)$ vertices, it follows that $G$ has a spanning tree having at least $2\lceil\frac{k}{4}\rceil \ell$ leaves.

Let $L$ be the set of these $2\lceil\frac{k}{4}\rceil \ell$ leaves of $T$.
Because every vertex of $L$ has degree at least two in $G$, by applying \Cref{lem:diverse} for $V_{\rm NT}=\emptyset$, we have that $G$ has  $\ell$ {\kDiv} spanning trees $T_1,\ldots,T_\ell$.
Thus, $(G,p,q,k,\ell)$ is a yes-instance of {\dLIST} if $|V(G)|\geq 4\lceil\frac{k}{4}\rceil\ell(\ell+6)$.
This concludes Case~1.

\subparagraph*{Case~2.} In this case, $\max\{p,q\}>0$.
We have that \Cref{rule:leaves} assigns $p \neq 0$ or $q \neq 0$. 
Then, it must be that $h < p$ or $h < q$, implying that $h < \max(p, q)$.
In this case, we use the following variant of \Cref{rule:simplefinal}.

\begin{reductionrule}\label{rule:advfinal}
If $|V(G)|<(2(\max\{p,q\}+2\lceil\frac{k}{4}\rceil\ell)+q)(\ell+6)$ then return the instance $(G,p,q,k,\ell)$ and stop. Otherwise, call the algorithm from \Cref{prop:MIST-dec-result} for the instance $(G,q)$ of \MIST, and for its output $(G',q')$, output the instance $(G',0,q',1,1)$ of \dLIST.
\end{reductionrule}

To show safeness, assume that $|V(G)|\geq (2(\max\{p,q\}+2\lceil\frac{k}{4}\rceil\ell)+q)(\ell+6)$.
First, we argue that $(G,p,q,k,\ell)$ is a yes-instance of {\dLIST} if and only if $(G,0,q,1,1)$ is a yes-instance of {\dLIST}.

The forward implication ($\Rightarrow$) is trivial.
Thus, we have to prove the backward direction ($\Leftarrow$) of the reduction rule.
We assume that $(G,0,q,1,1)$ is a yes-instance to {\dLIST}.
We argue that the same holds for $(G,p,q,k,\ell)$.

If $(G,0,q,1,1)$ is a yes-instance to {\dLIST}, then $G$ has a spanning tree $T$ with at least $q$ internal vertices. 
Because $|V(G)|\geq (2(\max\{p,q\}+2\lceil\frac{k}{4}\rceil\ell)+q)(\ell+6)$ we can apply  \Cref{lem:leavesbound} by choosing an arbitrary set $V_{\rm NT}$ of $q$ internal vertices of $T$.
As \Cref{rule:contract} is not applicable, $G$ has no degree-2-path of length at least $\ell + 3$.
Hence, $G$ has no degree-2-path of length at least $\ell + 3$ whose internal vertices are disjoint from $V_{\rm NT}$.
Therefore, we can use the premise of \Cref{lem:leavesbound}. 
Then $G$ has a spanning tree $T'$ with at least $\max\{p,q\}+2\lceil\frac{k}{4}\rceil\ell$ leaves.

We redefine $V_{\rm NT}$ and $L$ as follows.
Recall that because of \Cref{rule:leaves}, $G$ has $h<\max\{p,q\}$ pendant vertices.
 If $T'$ has at least $2\lceil\frac{k}{4}\rceil\ell+h+2q$ leaves, then we first choose $V_{\rm NT}$ be an arbitrary set of $q$ internal vertices of $T'$.
 For each $v\in V_{\rm NT}$, we choose  two arbitrary neighbors $x_v$ and $y_v$ in $T'$, and set $X=\bigcup_{v\in V_{\rm NT}}\{x_v,y_v\}$. 
 Since $|X|\leq 2q$, and $T'$ has at least $(2\lceil\frac{k}{4}\rceil\ell+h+2q)$ leaves, no matter how we choose $X$, there are at at least $2\lceil\frac{k}{4}\rceil\ell+h$ leaves in $T'$ that are not in $X$.
 Out of these $2\lceil\frac{k}{4}\rceil\ell+h$ leaves in $T'$, $h$ are pendant vertices in $G$.
 We choose the rest of $2\lceil\frac{k}{4}\rceil\ell$ such leaves of $T'$ as $L$, that are not pendant vertices of $G$.
 In particular, by choice of $L$, every vertex of $V_{\rm NT}$ has at least two neighbors in $T$ that are in $X$ such that $X \cap L = \emptyset$.
 Hence, we get a set $L$ such that $|L| \geq 2\lceil\frac{k}{4}\rceil\ell$, and a set $V_{\rm NT}$ of internal vertices $T'$ such that every vertex of $V_{\rm NT}$ has at least two neighbors outside $L$.
 This ensures us that the assumption to \Cref{lem:diverse} is satisfied.
 
Suppose that $T'$ has less than  $2\lceil\frac{k}{4}\rceil\ell+h+2q$ leaves.
Note that in this case, less than $2\lceil\frac{k}{4}\rceil\ell+h+2q$ vertices of $T'$ are the neighbors of leaves. 
Because $|V(G)|\geq (2(\max\{p,q\}+2\lceil\frac{k}{4}\rceil\ell)+q)(\ell+6)$, at least 
$(2(\max\{p,q\}+2\lceil\frac{k}{4}\rceil\ell)+q)(\ell+6)- 2(2\lceil\frac{k}{4}\rceil\ell+h+2q)$ vertices of $T'$ are neither leaves nor neighbors of the leaves.
Note that $(2(\max\{p,q\}+2\lceil\frac{k}{4}\rceil\ell)+q)(\ell+6)- 2(2\lceil\frac{k}{4}\rceil\ell+h+2q) \geq 12q + 8k\ell + k\ell^2 \geq q$.
Hence, there are at least $q$ vertices in $T'$ that are neither leaves nor the neighbors of leaves.
Therefore, there are $q$ vertices that are internal to $T'$.
As they are internal in $T'$, we choose $V_{\rm NT}$ to be a set of $q$ such vertices non-adjacent to the leaves.
As there are $2\lceil\frac{k}{4}\rceil\ell + \max(p, q) > 2\lceil\frac{k}{4}\rceil\ell + h$ leaves in $T'$, we can choose any set $L$ of $2\lceil\frac{k}{4}\rceil\ell$ leaves of $T'$ that are not pendants of $G$.
This ensures us that the premise to \Cref{lem:diverse} is satisfied.

Then by  \Cref{lem:diverse}, $G$ has $\ell$ {\kDiv} spanning trees $T_1,\ldots,T_\ell$ such that for each $i\in[\ell]$, the vertices of $V_{\rm NT}$ are internal in $T_i$, and $T_i$ has at least 
 $\max\{p,q\}+2\lceil\frac{k}{4}\rceil\ell-\lceil\frac{k}{4}\rceil\geq p$ leaves.
 This means that  $(G,p,q,k,\ell)$ is a yes-instance  of {\dLIST}.
 
We have that the instances  $(G,p,q,k,\ell)$ and $(G,0,q,1,1)$ of {\dLIST} are equivalent. Observe that the instance $(G,0,q,1,1)$ of {\dLIST} is equivalent to the instance $(G,q)$ of {\MIST}.  
Therefore, we can apply the kernelization algorithm from~\Cref{prop:MIST-dec-result}. Because for its output $(G',q')$, the instance $(G',q')$ of {\MIST} is equivalent to the instance 
$(G',0,q',1,1)$ of \dLIST, we conclude that \Cref{rule:advfinal} is safe. This concludes Case~2 and the description of the kernelization algorithm for \dLIST.

The correctness of the algorithm follows from the safeness of \Cref{rule:contract}--\Cref{rule:advfinal}.
To upper-bound the size of the output graph, observe that in the worst case, we output a graph with $|V(G)|<(2(\max\{p,q\}+2\lceil\frac{k}{4}\rceil\ell)+q)(\ell+6)$ in \Cref{rule:advfinal} as the kernelization algorithm from~\Cref{prop:MIST-dec-result} produces a graph with at most $2q$ vertices. Thus, we obtain a kernel with $\OO((p+q+k\ell)\ell)$ vertices. Since \Cref{rule:contract}--\Cref{rule:advfinal} can be applied in polynomial time, the overall running time of our algorithm is polynomial. This completes the proof.
\end{proof}

Observe that in the above problem {\dLIST}, if we put $p = 0$, then the problem is the same as {\DivMIST}, that is, the diverse variant of \MIST. 
Analogously, if we put $q = 0$, then the problem is the same as {\DivMLST}---the diverse variant of \MLST.
Hence, we have the following results as a corollary.

\begin{corollary}
\label{cor:MLST-MIST}
{\DivMLST} parameterized by $p + k + \ell$ admits a kernel with $\OO((p+k\ell)\ell)$ vertices.
{\DivMIST} parameterized by $q + k + \ell$ admits a kernel with $\OO((q+k\ell)\ell)$ vertices.
\end{corollary}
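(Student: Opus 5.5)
The corollary follows by specializing \Cref{thm:kernLIST}, but some care is needed because the theorem's output lives in \dLIST, whereas the corollary asks for kernels for \DivMLST{} and \DivMIST{} themselves. I would therefore rerun the algorithm of \Cref{thm:kernLIST} on the two special cases and check that each output instance can be written as an instance of the restricted problem.

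For \DivMLST, take an instance $(G,p,k,\ell)$ and view it as the \dLIST instance $(G,p,0,k,\ell)$. \Cref{rule:contract} keeps $q=0$. \Cref{rule:twoleaves}, \Cref{rule:leaves} and \Cref{rule:delleaves} only change $G$ and $p$. So Case~1 ends either with a trivial answer or with an instance that still has $q=0$, on fewer than $4\lceil\frac{k}{4}\rceil\ell(\ell+6)$ vertices. In Case~2 we have $\max\{p,q\}=p>0$, and \Cref{rule:advfinal} either stops with fewer than $(2(p+2\lceil\frac{k}{4}\rceil\ell))(\ell+6)$ vertices (again with $q=0$), or it invokes the \MIST kernel with $q=0$. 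That last branch is harmless: the instance $(G,0,0,1,1)$ is a yes-instance for every connected $G$, so we can output a trivial yes-instance instead. Every output thus has $q=0$, is an instance of \DivMLST, and has $\OO((p+k\ell)\ell)$ vertices.

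For \DivMIST, take $(G,q,k,\ell)$ and view it as $(G,0,q,k,\ell)$. Here \Cref{rule:contract} lowers $q$, \Cref{rule:twoleaves} keeps $p=\max\{0,p-1\}=0$, and \Cref{rule:leaves} can only set values to $0$. Hence $p$ remains $0$ throughout. In Case~2 the output is either the current instance with $p=0$ on $\OO((q+k\ell)\ell)$ vertices, or $(G',0,q',1,1)$ from \Cref{prop:MIST-dec-result}, which also has $p=0$ and at most $2q$ vertices. Both are instances of \DivMIST.

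The only step that is more than bookkeeping is confirming that no rule makes the fixed parameter positive and that the trivial yes/no instances can be chosen inside the restricted problem. Both are immediate by inspecting the rules as above. Correctness and the polynomial running time carry over directly from the safeness arguments in the proof of \Cref{thm:kernLIST}.
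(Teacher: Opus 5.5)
Your proposal is correct and takes the same route as the paper, which obtains the corollary by specializing \Cref{thm:kernLIST} to $q=0$ (giving \DivMLST{}) and to $p=0$ (giving \DivMIST{}). The paper leaves implicit the checks you add: that no reduction rule makes the fixed parameter positive, and that in the $q=0$ case the call to the \MIST kernel in \Cref{rule:advfinal} should be replaced by a trivial yes-instance so the output stays within \DivMLST{}. You handle both correctly.
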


\subsection{Kernel for {\DivLNTST}}

We restate \Cref{thm:kernLNTST} here, which is proved by similar arguments.
Thus, we only sketch its proof, focusing on the differences between the algorithms for {\dLIST} and {\dLNTST}.

\LNPST*

\begin{proof}
Let $(G,V_{\rm NT},p,k,\ell)$ be an instance of {\dLNTST}. We assume that $G$ is a connected graph distinct from a tree, as otherwise, {\dLNTST} has a straightforward solution. Also, if there is $v\in V_{\rm NT}$ with $\deg_G(v)=1$, then we immediately return a trivial no-instance and stop as a pendant vertex of $G$ is a leaf of any spanning tree.  
Next, we exhaustively apply the following reduction rule whose safeness follows from \Cref{lem:long}. 

\begin{reductionrule}
\label{rule:contractNT}
If $G$ contains a degree-$2$-path $P$ of length at least $\ell+3$ the internal vertices of which are disjoint from $V_{\rm NT}$, then contract an edge of $P$.
\end{reductionrule}

Since all pendant edges of $G$ are included in every spanning tree, if $G$ has at least $p$ pendant vertices, then every spanning tree has at least $p$ leaves.
Therefore, we apply the next rule. 

\begin{reductionrule}
\label{rule:leavesNT}
If $G$ has at least $p$ pendant vertices, then set $p:=0$.
\end{reductionrule}

We proceed with considering two cases depending on the value of $p$.

\subparagraph*{Case~1.} In this case, we consider $p=0$.
We use the observation that every pendant edge is included in any spanning tree, and each vertex adjacent to a pendant vertex is an internal vertex in any spanning tree.
This allows us to get rid of pendant vertices by exhaustively applying the following rule.

\begin{reductionrule}
\label{rule:delleavesNT}
If $G$ has a pendant vertex $v$ then set $G:=G-v$ and set $V_{\rm NT}:=V_{\rm NT}\setminus\{u\}$ for the unique neighbor $u$ of $v$ in $G$.
\end{reductionrule}

After applying the rule, we have that $G$ has no pendant vertices. Also, because $G$ is not a tree, the minimum degree of $G$ is at least two. 
Then we apply the following rule.  

\begin{reductionrule}\label{rule:simplefinalNT}
If $|V(G)|< (4\lceil\frac{k}{4}\rceil\ell+5|V_{\rm NT}|)(\ell+6)$ then return the instance $(G,V_{\rm NT},p,k,\ell)$ and stop. 
Otherwise, call the algorithm from \Cref{prop:NTST-dec-result} for the instance $(G,V_{\rm NT})$ of \NTST, and for its output $(G',V_{\rm NT}')$, output the instance $(G',V_{\rm NT}',0,1,1)$ of \dLNTST.
\end{reductionrule}

To prove that the rule is safe, assume that $|V(G)|\geq (4\lceil\frac{k}{4}\rceil\ell+5|V_{\rm NT}|)(\ell+6)$.
We claim that in this case, the instances 
$(G,V_{\rm NT},p,k,\ell)$ and $(G,V_{\rm NT},0,1,1)$ of {\dLNTST} are equivalent.

For the forward implication ($\Rightarrow$) is straightforward that if $(G,V_{\rm NT},0,k,\ell)$ is a yes-instance, then $(G,V_{\rm NT},0,1,1)$ is a yes-instance as well.

For the backward direction ($\Leftarrow$), we assume that $(G,V_{\rm NT},0,1,1)$ is a yes-instance.
Then $G$ has a spanning tree $T$ such that the vertices of $V_{\rm NT}$ are internal in $T$.
As \Cref{rule:contractNT} is not applicable, $G$ has no degree-$2$-path of length at least $\ell+3$ the internal vertices of which are disjoint from $V_{\rm NT}$.
Observe that $(4\lceil\frac{k}{4}\rceil\ell+5|V_{\rm NT}|)(\ell+6) = (2(2\lceil\frac{k}{4}\rceil\ell+2|V_{\rm NT}|) + |V_{\rm NT}|)(\ell + 6)$.
Then, by putting $s = \ell + 3$ in Lemma \ref{lem:leavesbound}, we conclude that $G$ has a spanning tree $T'$ with at least $2\lceil\frac{k}{4}\rceil\ell+2|V_{\rm NT}|$ leaves, and every vertex of $V_{\rm NT}$ is internal in $T'$. 

For every $v\in V_{\rm NT}$, we arbitrarily select two distinct neighbors  $x_v$ and $y_v$ in $T'$.
As $v \in V_{\rm NT}$ is an internal vertex in $T'$, these neighbors $x_v$ and $y_v$ exist, and there are at most $2|V_{\rm NT}|$ such neighbors.
Since the vertices of $V_{\rm NT}$ are internal in $T'$, $G$ has no pendant vertex because \Cref{rule:delleavesNT} is not applicable, and $T'$ has at least  $2\lceil\frac{k}{4}\rceil\ell + 2|V_{\rm NT}|$ leaves, there are at least $2\lceil\frac{k}{4}\rceil\ell$ leaves of $T'$ that are disjoint from these neighbors $x_v$ and $y_v$.
In particular, each of these $2\lceil\frac{k}{4}\rceil\ell$ leaves of $T'$ is of degree at least two in $G$.

We denote by $L$ the set of these leaves that are disjoint from these $2|V_{\rm NT}|$ many vertices that are neighbors of the vertices in $V_{\rm NT}$.
Observe that $|L| \geq 2\lceil\frac{k}{4}\rceil\ell$, $L$ is a subset of leaves in $T'$, and every vertex of $V_{\rm NT}$ is adjacent to at least two vertices of $T'$ that are not in $V_{\rm NT}$.
This spanning tree $T'$ together with $L$ and $V_{\rm NT}$ satisfy the preconditions to 
\Cref{lem:diverse}.
As $T'$ has $2\lceil\frac{k}{4}\rceil\ell+2|V_{\rm NT}|$ leaves, 
Now, we use \Cref{lem:diverse} and deduce that $G$ has $\ell$ {\kDiv} spanning trees $T_1,\ldots,T_\ell$ such that the vertices of $V_{\rm NT}$ is internal in each $T_i$.
 Therefore,  $(G,V_{\rm NT},0,k,\ell)$ is a yes-instance of \dLNTST. This proves that the rule is safe.

\medskip
We use the observation that the instance  $(G,V_{\rm NT},0,1,1)$ of {\dLNTST} is equivalent to the instance $(G,V_{\rm NT})$ of {\NTST}.
With this observation, we combine the fact that the instances $(G,V_{\rm NT},0,k,\ell)$ and $(G,V_{\rm NT},0,1,1)$ of {\dLNTST} are equivalent. 
Next, we invoke the kernelization algorithm from Proposition \ref{prop:NTST-dec-result}.
Because its output $(G',V_{\rm NT}')$ is equivalent to the instance $(G',V_{\rm NT}',0,1,1)$ of \dLNTST, we obtain that \Cref{rule:simplefinalNT} is safe. 
This completes our kernelization algorithm for Case~1 ($p = 0$).

\subparagraph*{Case~2.} In this case, we consider that $p>0$.
Notice that \Cref{rule:leavesNT} is not applicable, and $p > 0$, it implies that the number of pendant vertices of $G$ is less than $p$.
In this case, our next reduction rule is the following.

\begin{reductionrule}\label{rule:advfinalNT}
If $|V(G)|< (4\lceil\frac{k}{4}\rceil\ell+2p+5|V_{\rm NT}|)(\ell+6)$ then return the instance $(G,V_{\rm NT},p,k,\ell)$ and stop. 
Otherwise, call the algorithm from \Cref{prop:NTST-dec-result} for the instance $(G,V_{\rm NT})$ of \NTST, and for its output $(G',V_{\rm NT}')$, output the instance $(G',V_{\rm NT}',0,1,1)$ of \dLNTST.
\end{reductionrule}

The safeness is proved similarly to the safeness of \Cref{rule:simplefinalNT}. Suppose that  $|V(G)|\geq (4\lceil\frac{k}{4}\rceil\ell+2p+5|V_{\rm NT}|)(\ell+6)$. We show that  the instances 
$(G,V_{\rm NT},p,k,\ell)$ and $(G,V_{\rm NT},0,1,1)$ of {\dLNTST} are equivalent. 

For the forward implication ($\Rightarrow$), if $(G,V_{\rm NT},p,k,\ell)$ is a yes-instance, then $(G,V_{\rm NT},0,1,1)$ is a yes-instance.
For the backward direction ($\Leftarrow$), we assume that $(G,V_{\rm NT},0,1,1)$ is a yes-instance and $G$ has a spanning tree $T$ such that the vertices of $V_{\rm NT}$ are internal in $T$.
Because \Cref{rule:contractNT} cannot be applied to the considered instance, $G$ has no degree-$2$-path of length at least $\ell+3$ having internal vertices {disjoint} from $V_{\rm NT}$.
As {$|V(G)|\geq(4\lceil\frac{k}{4}\rceil\ell+2p+5|V_{\rm NT}|)(\ell+6) = (2(2\lceil\frac{k}{4}\rceil\ell + p + 2|V_{\rm NT}|) + |V_{\rm NT}|)(\ell + 6)$,}
due to \Cref{lem:leavesbound}, $G$ has a spanning tree $T'$ with at least $2\lceil\frac{k}{4}\rceil\ell+p+2|V_{\rm NT}|$ leaves such that every vertex of $V_{\rm NT}$ is internal in $T'$.
For every $v\in V_{\rm NT}$, we arbitrarily select two distinct neighbors  $x_v$ and $y_v$ in $T'$.
There are at most $2|V_{\rm NT}|$ such neighbors $x_v$ and $y_v$.
Since the vertices of $V_{\rm NT}$ are internal in $T'$, and $T'$ has at least  $2\lceil\frac{k}{4}\rceil\ell+p + 2|V_{\rm NT}|$  leaves, there are at least $2\lceil\frac{k}{4}\rceil\ell + p$ leaves in $T'$ that are disjoint from these neighbors $x_v$ and $y_v$.
{But $G$ has at most $p-1$ pendant vertices, and $T'$ has at least $2\lceil\frac{k}{4}\rceil\ell + p$ leaves in $T'$ that are disjoint from these neighbors $x_v$ and $y_v$.
Out of these $2\lceil\frac{k}{4}\rceil\ell + p$ leaves, there are at most $p-1$ pendant vertices.
Hence, $T'$ has $2\lceil\frac{k}{4}\rceil\ell$ leaves that have degree at least two in $G$, and every vertex $v$ from $V_{\rm NT}$ has at least two neighbors $x_v, y_v$ in $T'$ that are disjoint from these $2\lceil\frac{k}{4}\rceil\ell$ leaves.}
We denote these $2\lceil\frac{k}{4}\rceil\ell$ leaves by $L$.
Observe that every vertex of $V_{\rm NT}$ has at two neighbors that are outside $L$.

Applying \Cref{lem:diverse}, we obtain that $G$ has $\ell$ {\kDiv} spanning trees $T_1,\ldots,T_\ell$ such that the vertices of $V_{\rm NP}$ are internal in each tree.
Furthermore, as $T'$ has $2\lceil\frac{k}{4}\rceil\ell+p+2|V_{\rm NT}|$ leaves, for each $i\in[\ell]$, $T_i$ has at least $2\lceil\frac{k}{4}\rceil\ell+p+2|V_{\rm NT}|-\lceil\frac{k}{4}\rceil\geq p$ leaves.
Thus,  $(G,V_{\rm NT},p,k,\ell)$ is a yes-instance of {\dLNTST}. This concludes the safety proof.

\medskip
Now, we use the observation that the instance $(G,V_{\rm NT},0,1,1)$ of {\dLNTST} is equivalent to the instance $(G,V_{\rm NT})$ of {\NTST}.
Then we again  combine the fact that the instances $(G,V_{\rm NT},p,k,\ell)$ and $(G,V_{\rm NT},0,1,1)$ of {\dLNTST} are equivalent. 
Now, we invoke the kernelization algorithm from~\Cref{prop:NTST-dec-result} to the instance $(G,V_{\rm NT})$ that outputs an equivalent instance $(G',V_{\rm NT}')$ satisfying that $|V(G')|\leq 3|V_{\rm NT}'|$.
As $(G',V_{\rm NT}')$ is equivalent  to the instance $(G',V_{\rm NT}',0,1,1)$ of {\dLIST},  we conclude that  \Cref{rule:advfinalNT} is safe. 
This concludes Case~2 and the description of the kernelization algorithm for {\dLNTST}.

The correctness of {our kernelization algorithm} follows from the safeness of Reduction Rules \ref{rule:contractNT}--\ref{rule:advfinalNT}.
Our algorithm outputs a graph with the maximum number of vertices in \Cref{rule:advfinalNT}, and $|V(G)|< (4\lceil\frac{k}{4}\rceil\ell+2p+5|V_{\rm NT}|)(\ell+6)$ in this case.
Therefore, our kernelization algorithm produces an instance of {\dLNTST} with 
$\OO((|V_{\rm NT}|+p+k\ell)\ell)$ vertices as required by the claim of the theorem.
The overall running time is polynomial because each reduction rule can be applied in polynomial time.
This completes the proof.
\end{proof}

Observe that putting $p = 0$ gives us {\DivNTST}---the diverse version of \NTST.
Hence, we have the following corollary.

\begin{corollary}
\label{cor:DNTST-result}
{\DivNTST} when parameterized by $|V_{\rm NT}| + k + \ell$ admits a kernel with $\OO((|V_{\rm NT}| + k\ell)\ell)$ vertices.
\end{corollary}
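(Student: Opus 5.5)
The corollary follows directly from \Cref{thm:kernLNTST} by specializing $p=0$. An instance $(G,V_{\rm NT},k,\ell)$ of \DivNTST{} is the same as the instance $(G,V_{\rm NT},0,k,\ell)$ of \dLNTST: a spanning tree always has at least $0$ leaves, so the leaf constraint is vacuous. I would therefore run the kernelization algorithm of \Cref{thm:kernLNTST} on $(G,V_{\rm NT},0,k,\ell)$. Its parameter is $|V_{\rm NT}|+0+k+\ell$, which is exactly the parameter of the corollary.

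The next step is to check that the output is again an instance of \DivNTST, i.e.\ that its leaf parameter is still $0$. With $p=0$, the algorithm first applies \Cref{rule:contractNT}, which does not change $p$, and \Cref{rule:leavesNT}, which only sets $p:=0$. It then enters Case~1. There, \Cref{rule:delleavesNT} keeps $p=0$ and only shrinks $V_{\rm NT}$. Finally, \Cref{rule:simplefinalNT} returns either the current instance with $p=0$ or an instance $(G',V_{\rm NT}',0,1,1)$. The trivial yes- and no-instances produced at the start, for trees or for a pendant vertex in $V_{\rm NT}$, can also be chosen with $p=0$. So every output is of the form $(G',V_{\rm NT}',0,k',\ell')$, which is an equivalent instance of \DivNTST.

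For the size bound, Case~1 returns either a graph with fewer than $(4\lceil\frac{k}{4}\rceil\ell+5|V_{\rm NT}|)(\ell+6)$ vertices or the output of \Cref{prop:NTST-dec-result}, which has at most $3|V_{\rm NT}'|\le 3|V_{\rm NT}|$ vertices. Both are $\OO((|V_{\rm NT}|+k\ell)\ell)$. Note the bound is slightly better than substituting $p=0$ into the theorem's bound $\OO((|V_{\rm NT}|+p+k\ell)\ell)$, which would give the same expression anyway. The running time is polynomial, as established in the theorem.

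The only real obstacle is the bookkeeping point that the kernel must stay within the restricted problem, i.e.\ that the algorithm never outputs a positive leaf requirement. The case analysis above settles this. Alternatively, one could note that any output instance with $p'>0$ can be handled by the same argument, but since every reduction rule only decreases or zeroes $p$, no such instance arises.
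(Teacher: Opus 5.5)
Your proposal is correct and follows the paper's route: the paper also obtains the corollary by setting $p=0$ in \Cref{thm:kernLNTST}. You additionally check that every output instance, including the one produced via \Cref{prop:NTST-dec-result}, keeps leaf requirement $0$, so the kernel stays within \DivNTST{}; the paper leaves this point implicit.
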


\section{Conclusions and Future Research}
\label{sec:conclusion}

Our results in this paper provide a comprehensive framework on polynomial kernelizations for the variants of the {\SpanTree} problem with constraints on the number of leaves and internal vertices in the diversity paradigm.
It would be interesting to consider other variants of the {\SpanTree} problem.
First, we would like to point out that, while disjoint spanning trees~\cite{NashWilliams64,Edmonds65} or spanning trees with the minimum sum of all pairwise Hamming distances~\cite{HanakaKKO21} can be found in polynomial time, the computational complexity of finding pairwise $k$-diverse spanning trees is open. If $k=2$, then pairwise $k$-diverse spanning trees are distinct trees. Because distinct spanning trees of a given graph can be counted by Kirchhoff's matrix tree theorem (see, e.g., \cite{Bollobas02}), it can be decided in polynomial time whether a graph has $\ell$ pairwise $2$-diverse spanning trees.  However, for $k>2$, it is unknown whether finding three pairwise $k$-diverse spanning trees is polynomial or NP-hard. The other set of open problems stems from directed graphs. What is the complexity of finding diverse spanning trees (i.e., \emph{outbranchings}) in directed graphs?  Furthermore, it is interesting to investigate the directed analogs of \DivLIST and \DivLNTST.
We remark that the Parameterized Complexity of the \textsc{Directed Max-Leaf Spanning Tree} and \textsc{Directed Max-Internal Spanning Tree} problems were considered in~\cite{AlonFGKS09,BonsmaD11,DaligaultGKY08,DaligaultT09,KneisLR11} and~\cite{CohenFGKSY10,DankelmannGK09,FominGLS12,GutinRK09}, respectively.





\end{document}